\renewcommand\tableofcontents{\listoftoc*{toc}} 
\author[1]{Piotr Faliszewski}
\author[1]{Łukasz Janeczko}
\author[2]{Andrzej Kaczmarczyk}
\author[1]{Grzegorz Lisowski}
\author[3]{Grzegorz Pierczyński}
\affil[1]{AGH University, Poland} 
\affil[2]{University of Chicago, USA} 
\affil[3]{University of Warsaw, Poland}
\theoremstyle{definition}
\newtheorem{definition}{Definition}[section]
\theoremstyle{plain}
\newtheorem{theorem}{Theorem}[section]
\newtheorem{observation}{Observation}[section]
\newtheorem{lemma}[observation]{Lemma}
\definecolor{myblue}{rgb}{0,0,0.6}
\definecolor{myred}{rgb}{0.6,0,0}
\tikzset{fontscale/.style = {font=\relsize{#1}}
}
\newcommand{\np}{{{\mathrm{NP}}}}
\newcommand{\conp}{{{\mathrm{coNP}}}}
\newcommand{\basicAV}{BasicAV}
\newcommand{\Phragmen}{Phragm\'en}
\newcommand{\Mes}{MES}
\newcommand{\naturals}{\mathbb{N}}
\newcommand{\MesLong}{\ensuremath{\text{Method of Equal Shares}}}
\newcommand{\budget}{\ensuremath{B}}
\newcommand{\cost}{\ensuremath{\mathrm{cost}}}
\newcommand{\stratprof}{\ensuremath{\mathbf{s}}}
\newcommand{\glnote}[1]{\todo[color=pink!30, inline]{GL: #1}}
\newcommand{\gpnote}[1]{\todo[color=magenta!30, inline]{GP: #1}}
\date{}
\newcommand{\xthreec}{\textsc{X3C}}
\renewcommand{\phi}{\varphi}
\renewcommand{\leq}{\leqslant}
\renewcommand{\geq}{\geqslant}
\definecolor{darkred}{rgb}{0.64,0,0}
\definecolor{darkcyan}{rgb}{0,0.55,0.55}
\newcommand{\rowcolor}[1]{\textcolor{black}{#1}}
\newcommand{\columncolor}[1]{\textcolor{black}{#1}}
\newcommand{\nfgame}[1]{%
\setsepchar{ }
\readlist\arg{#1}
\begin{tikzpicture}[scale=0.5]
	\node (RT) at (-2,1) [label=left:\rowcolor{\arg[1]}] {};
\node (RB) at (-2,-1) [label=left:\rowcolor{\arg[2]}] {};
\node (CL) at (-1,2) [label=above:\columncolor{\arg[3]}] {};
\node (CR) at (1,2) [label=above:\columncolor{\arg[4]}] {};
\node (RTL) at (-1.4,0.6) {\rowcolor{\arg[5]}}; 
\node (CTL) at (-0.6,1.4) {\columncolor{\arg[6]}}; 
\node (RBL) at (-1.4,-1.4) {\rowcolor{\arg[7]}};
\node (CBL) at (-0.6,-0.6) {\columncolor{\arg[8]}};
\node (RTR) at (0.6,0.6) {\rowcolor{\arg[9]}};
\node (CTR) at (1.4,1.4) {\columncolor{\arg[10]}};
\node (RBR) at (0.6,-1.4) {\rowcolor{\arg[11]}};
\node (CBR) at (1.4,-0.6) {\columncolor{\arg[12]}};
\draw[-,very thick] (-2,-2) to (2,-2);
\draw[-,very thick] (-2,0) to (2,0);
\draw[-,very thick] (-2,2) to (2,2);
\draw[-,very thick] (-2,-2) to (-2,2);
\draw[-,very thick] (0,-2) to (0,2);
\draw[-,very thick] (2,-2) to (2,2);
\draw[-,very thin] (-2,2) to (0,0);
\draw[-,very thin] (0,0) to (2,-2);
\draw[-,very thin] (-2,0) to (0,-2);
\draw[-,very thin] (0,2) to (2,0);
\end{tikzpicture}}
\newif\ifhideproofs
\providecommand{\solutionname}{Proof}
\title{Project Submission Games in Participatory Budgeting}
\begin{document}

\maketitle

\begin{abstract}
  We introduce 
  the framework of \emph{project submission games}, capturing the
  behavior of project proposers in participatory budgeting (and multiwinner
  elections). Here, each proposer submits a subset of project
  proposals, aiming at maximizing the total cost of those that
  get funded.
  We focus on finding conditions under which pure Nash equilibria (NE)
  exist in our games, and on the complexity of checking whether they
  exist. We also seek algorithms for computing best responses for the proposers.



\end{abstract}

\section{Introduction}\label{sec:intro}

In a typical \emph{participatory budgeting} (PB)
scenario~\citep{cab:j:participatory-budgeting,goe-kri-sak-ait:c:knapsack-voting,rey-mal:t:pb-survey}
a city first fixes some budget, then groups of activists propose
projects that they would like to see implemented---each with its
individual cost and scope---and, finally, the citizens vote on which
of them should be carried out. We focus on the approval setting where
the voters indicate for each project whether they support it or
not, and we study the strategic issues that project proposers face when
deciding which projects to submit.

Let us consider the following example.  The city fixed the budget of
75'000 EUR and two activist groups came up with, altogether, six
projects that they could propose, diversified both in terms of their
cost and in terms of their support.  Specifically, Transparent
Revolution in Environmental Engineering group (TREE) developed
projects $T_1$, $T_2$, and $T_3$, focused on building parks in
different areas of the city, whereas the Bicycle Initiative and
Kinetic Enthusiasts society (BIKE) invented projects $B_1, B_2$, and
$B_3$, devoted to improving the cycling infrastructure. The costs and
numbers of voters supporting each of them are given in Table
\ref{table:InitialEx}.

The city decided to use the BasicAV voting rule,\footnote{We note that
  this rule is also called
  GreedyAV~\citep{boe-fal-jan-kac:c:pb-robustness} or
  GreedCost~\citep{rey-mal:t:pb-survey}. We chose to call it BasicAV
  to emphasize that it is the most commonly used rule in practice.}
which considers the projects from the most to the least supported 
and selects those that fit into the remaining budget at the time of
consideration.
Hence, if both groups submitted all their projects, the rule would
select $T_1$, $B_1$, $B_2$, and $B_3$. However, in order to have as
much of the budget dedicated to their projects as possible, TREE did
not submit $T_1$, as then projects $T_2$, $B_1$, and $T_3$ would be
selected, giving them $60'000$ EUR for their projects, instead of
$50'000$ EUR.  However, BIKE realized that this might happen and chose
not to submit $B_1$. Consequently, projects $T_2$, $T_3$, $B_2$ and
$B_3$ were funded. This was fortunate as neither of the groups could
have improved further from this state.  Hence,
choosing what projects to submit involves non-trivial strategic~considerations. 

\begin{table}[t]
	\begin{center}
		\renewcommand{\tabcolsep}{3pt}
		
		\begin{tabular}{ c c c  c c }
			
                \toprule
			& {Cost (EUR)} & {Support}   \\ \midrule
		
		${T_1 }$ & $50'000$ & $6'000$  \\ 
		${T_2}$ &  $30'000$ &   $5'000$   \\
		${B_1 }$ & $10'000$ & $4'000$   \\
		
			\bottomrule
		\end{tabular}\quad
        \hspace{4em}
		\begin{tabular}{ c c c  c c }
			
		\toprule
			& {Cost (EUR)} & {Support}   \\ \midrule
		
		${T_3}$ & $30'000$ &  $3'000$  \\
            ${B_2}$ & $7'000$ &  $2'000$   \\
            ${B_3}$ & $7'000$ &  $1'000$  \\
		
			\bottomrule
		\end{tabular}
	\end{center}
	
	\caption{Costs and levels of support of projects considered by TREE and BIKE groups.}
	\label{table:InitialEx}
\end{table}

Our goal is to provide game-theoretic and computational analysis of scenarios
analogous to those in the above example.
%
To this end, we introduce \emph{project submission games (PSGs)} where
project proposers are the players, each proposer owns a set of
projects, and their goal is to maximize the amount of money spent on
their projects by choosing which ones to submit. We ask if our games
always admit Nash equilibria and, if not, what is the complexity of
deciding if they exist. We adopt the following assumptions:
\begin{enumerate}
\item The proposers have full knowledge of the setting. In particular,
  they know exactly which voters approve which projects. 
  Naturally, this is a simplification---in practice, proposers have at most approximate information about voters' preferences, e.g., via polls. We further weaken this 
  assumption in the experimental section (\Cref{sec:experiments}). In the theoretical sections, we believe that this restriction is justified, as a first necessary step. 

\item The voters approve projects independently from the presence of
  the other projects. This is natural in the context of approval-based
  elections, where voters can simply vote for the projects they
  consider worth implementing (this is in sync with so-called
  \emph{threshold approval
    voting}~\citep{ben-nat-proc-sha:j:pb-elicitation,fai-ben-gal:c:pb-formats}).
\end{enumerate}

We consider several variants of our games, depending on the costs of
the projects, allowed project-submission strategies, and the structure
of the preferences:
\begin{description}
  \item [Costs.]
    By default, we allow arbitrary project costs, but we also analyze
    the setting where all projects have equal, unit costs. This models
    the multiwinner
    setting~\citep{fal-sko-sli-tal:b:multiwinner-voting,lac-sko:b:multiwinner-approval},
    where instead of deciding which projects to fund, the goal is to
    select a committee of a given, fixed size that represents the
    voters. Hence, in this case, the budget is the size of the
    committee and the ``projects'' are the candidates that run for the
    committee (yet, for consistency, we will use terms ``budget'' and ``projects''
    even in this setting).
  \item[Submission Strategies.] By default, each proposer can submit
    any nonempty subset of their projects, but we also consider the
    setting where each proposer submits exactly one project. This
    captures the scenario where the projects are variants of the same
    idea (e.g., with different scope, costs, and targeted for
    different voter groups).

  \item[Preference Structure.] By default, we do not enforce any
    structure over the voters' approvals, but we do consider the
    party-list setting, where each two projects are either approved by
    the same voters or by disjoint groups of voters. This assumption
    most naturally fits the multiwinner framework, where the voters
    vote for ``parties'', but it also makes sense in the PB setting,
    as it generalizes the scenarios where each voter approves exactly
    one project, which is sometimes enforced in practice.
\end{description}

Overall, we find that for most variants of our games Nash equilibria
are not guaranteed to exist, and it is computationally challenging to
decide if they do. Indeed, this task is often both $\np$- and
$\conp$-hard, and even finding proposers' best responses is
computationally challenging. 
However, for the single-project
submission setting, our games become much easier to analyze,
especially for multiwinner elections. 
However, our
experiments show that Nash equilibria are common and typically can be
found using simple dynamics.

We  emphasize that it is not immediately clear whether the fact that our games are in some contexts hard or easy to play should be considered ``negative'' or ``positive''. Indeed, it is conceivable that the interest of the project proposers (to maximize their profit) might conflict with the interests of the voters or the election designer. We address this issue in \Cref{sec:experiments}.

All missing proofs are available in \Cref{app:proofs}. We discuss related
work at the end of the paper, in \Cref{sec:related}.

\section{Preliminaries}\label{sec:preli}

A \emph{participatory budgeting election} (an \emph{election}, for
short) is a tuple $E=(C,V, \budget)$ that consists of a set of
\emph{projects}~$C = \{c_1, \dots, c_m \}$, a collection of
\emph{voters} $V= (v_1, \dots, v_n )$, and a \emph{budget}
$\budget \in \mathbb{N}$. Each project $c\in C$ has a \emph{cost}, denoted as
$\cost(c)$, and each voter $v$ has a set of projects
$A(v) \subseteq C$ that they approve. 
For a subset
$W \subseteq C$ of projects, we let
$\cost(W) = \sum_{c \in W}\cost(c)$ be their total cost.  We say that
$W$ is \emph{feasible} if $\cost(W) \leq B$.
For a project $c$, by~$S(c)$ we mean the set of voters that approve it
and we refer to $|S(c)|$ as its \emph{approval score}.
A (participatory budgeting) 
\emph{voting rule}~$f$ is a function that given an election outputs a
feasible set of projects (note that $f$ is resolute, so we assume an internal
tie-breaking mechanism).
In the \emph{multiwinner} setting we assume each project to have cost $1$. In the \emph{party-list} setting~\citep{brill2018multiwinner} we assume that for each two
projects~$a, b \in C$, it holds that either $S(a) = S(b)$ or
$S(a) \cap S(b) = \emptyset$. In other words, either two project are
supported by the same voters or their sets of supporters are disjoint.



\paragraph{Participatory Budgeting Rules.}
For the case of PB, we focus on {\basicAV},~{\Phragmen} \citep{bri-fre-jan-lac:c:phragmen,los-chr-gro:c:phragmen-pb}
and Method of Equal Shares
(\Mes)~\citep{pet-sko:c:welfarism-mes,pet-pie-sko:c:pb-mes}
rules. Each of them starts with an empty set~$W$ and performs a number
of iterations, either extending $W$ with a single project or dropping
some project from consideration. 
We assume that there is a tie-breaking order~$\succ$ over the projects
and whenever a rule faces an internal tie, it selects the project
ranked highest in this order. For an election $E = (C,V,\budget)$, our
rules work as follows: 
\begin{description} 
\item[{\basicAV}.] In each iteration, {\basicAV} considers a
  project~$c$ with the highest approval score that it has not
  considered yet. If $W \cup \{c\}$ is feasible then it includes $c$
  in $W$ (i.e., it \emph{selects} or \emph{funds} $c$) and otherwise it
  drops it. The rule terminates after considering all the projects.



\item[\Phragmen.] 
  Initially, the voters have empty bank accounts, but they
  receive virtual currency at a constant rate of one currency unit per one
  unit of time.
  An iteration starts as soon as there is a project $c$ that has not
  been considered yet, such that voters in $S(c)$ have $\cost(c)$
  currency.
  If $W \cup \{c\}$ is feasible, then {\Phragmen} includes~$c$ in $W$
  and resets the accounts of the voters from $S(c)$ to zero (these
  voters \emph{buy} the project); otherwise it drops~$c$.  The rule
  terminates when no further iteration can start.

\item[\MesLong\ (\Mes).]%
	\!\!\footnote{Several variants of~\Mes{} have been proposed
		in the context of participatory budgeting, depending on the
		presumed utilities of the voters from approved projects. Here we consider
		the \emph{cost-utility} variant, introduced by \citet{fal-fli-pet-pie-sko-sto-szu-tal:c:pabulib}. An alternative definition is discussed in \Cref{app:mes}.}
  Initially, each voter receives $\nicefrac{\budget}{|V|}$ virtual
  currency. Then, each iteration proceeds as follows (let $b(v)$ be
  the money held by voter $v$ at the beginning of the iteration): For
  a project $c$ and a number $\rho$, we say that $c$ is
  $\rho$-affordable if it holds that:
  \begin{equation*}
    \textstyle
    \sum_{v\in S(c)} \min(b(v), \rho \cdot \cost(c)) = \cost(c),
  \end{equation*}
  i.e.,\ if the voters approving $c$ can afford it, provided that each
  of them contributes $\rho$ fraction of its cost (or all the money
  they have left if the $\rho$ fraction would be too much).  {\Mes}
  includes in $W$ the project that is affordable for the lowest value
  of $\rho$ (and takes
  $\min(b(v), \rho\cdot \cost(c))$ currency from each $v \in S(c)$; these voters \emph{buy} $c$).  {\Mes}
  terminates if no project is affordable for any $\rho$.
\end{description}

{\Mes} is well-known to occasionally output sets of projects that are
not exhaustive, i.e., sets that could be extended with some project
and remain
feasible~\citep{pet-sko:c:welfarism-mes,pet-pie-sko:c:pb-mes}. In
practice, various completion methods are used (see, e.g., the work of
\citet{fal-fli-pet-pie-sko-sto-szu-tal:c:pabulib} for their
comparison), but we disregard this issue for simplicity.


For the multiwinner setting, it is also common to study so-called
Thiele rules~\citep{thi:j:thiele,lac-sko:j:abc-rules} and their
sequential variants. This includes, e.g., the approval-based
Chamberlin--Courant rule (CC) and the proportional approval voting
rule (PAV).  We define (both sequential and global) Thiele rules in \Cref{app:thiele} and
we provide results for them there, but we omit them from the main
discussion for the sake of clarity. Indeed, modulo minor technicalities, our
results for them are analogous to those for \Phragmen{} or \Mes.

\section{Project Submission Games}\label{sec:psg}


Let $f$ be a voting rule.  A \emph{project submission game} $G$ for
rule $f$ (an $f$-PSG $G$) consists of (1)~a set of projects
$C=\{c_1, \ldots, c_m\}$, (2)~a set of \emph{project proposers}
$P = \{P_1, \ldots, P_\ell\}$, (3)~a collection of voters
$V=(v_1, \ldots, v_n)$, and (4)~a budget $\budget$.
The components $C$, $V$, $\budget$ are analogous to a PB election.
The proposers are the players in the game and each of them has some
potential projects thet they can choose to submit or not. Hence, we
view $P = \{P_1, \ldots, P_\ell\}$ as a partition of $C$.  As a
result, we may sometimes speak of a proposer $P_i$, meaning the player
in the game, and of a set of projects $P_i$, meaning the projects that
this proposer may submit. The intention will always be clear from the
context. We drop the rule $f$ from the notation if it is clear or
irrelevant.
Formally, a \emph{strategy} $s_i \subseteq P_i$ of proposer $P_i$ is a
nonempty subset of their submitted projects. 
 A~\emph{strategy profile} is a collection of such strategies,
denoted as~$\stratprof = (s_1, \dots, s_{\ell})$.
By $(\stratprof_{-i}, s_i')$ we denote the strategy profile obtained
from $\stratprof$ by replacing the stategy of $P_i$ with $s_i'$.
An election \emph{induced} by $\stratprof{}$ is a tuple
$E(\stratprof) = (\bigcup_{i \in [\ell]} s_i, V, \budget)$.
%
%
%
%
The \emph{utility} of a given proposer $P_i$ is defined as the total
cost of the selected projects that they submitted; formally,
$u_i(\stratprof)=\cost(f\big(E(\stratprof)\big) \cap P_i )$.


We analyze the existence, structure, and computational complexity of
computing \emph{pure Nash equilibria} (NE) in project submission
games. Given a game $G$, a strategy profile $\stratprof$ is an NE
profile if for each proposer $P_i$ and all $P_i$'s strategies $s_i' \subseteq P_i$,
we have 
$u_i(\stratprof_{-i}, s_i') \leq u_i(\stratprof)$. We are also
interested in computing best responses of given
proposers. 
Formally, we consider the following problems.

\begin{definition}
  Let $f$ be a voting rule.  (1)~In the $f$-\textsc{NE Existence} problem
  we are given an $f$-PSG $G$ and we ask if there is an NE profile for
  $G$.
  (2)~In the $f$-\textsc{Best Response}
  problem, we are given an $f$-PSG $G$, a proposer $P_i$, a value $x$,
  and a strategy profile $\stratprof_{-i}$ of players other than
  $P_i$;  we ask if there is a strategy $s_i$ of $P_i$ such that
  $u_i(\stratprof_{-i}, s_i) \geq x$. 
\end{definition}

In the next section we analyze the existance of Nash equilibria and
the complexity the above problems for PSGs. Then, in \Cref{sec:psg1},
we provide similar analysis for a variant of PSGs where each proposer
submits exactly one project.

\section{Theoretical Analysis of PSGs}\label{subsec:NEexistence}


In this section we study the conditions under which Nash Equilibria exist and can be computed efficiently, which turn out to be quite restrictive.
%
Specifically,
for each of BasicAV, \Phragmen, and \Mes{} there are PSGs without Nash
equilibria even for party-list profiles (and for BasicAV and \Mes{} it
even suffices to consider games with a single voter).

\begin{restatable}{proposition}{propNeNotExist}\label{prop:NeNotExist}
  For each rule among BasicAV, \Phragmen{} and \Mes{}, there is a PSG
 with two proposers that does not
  admit any NE, where the voters have party-list preferences
  (for the case of BasicAV and \Mes{} even with a single voter). 
\end{restatable}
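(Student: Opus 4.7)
The plan is to exhibit, for each of the three rules, an explicit PSG with two proposers and to verify by direct case analysis on the nine strategy profiles that every profile admits a profitable unilateral deviation.

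For \basicAV{} and \Mes{} with a single voter I would first observe that the two rules coincide on such instances: the unique voter approves every submitted project, giving it score one, and in \Mes{} every affordable project has $\rho = 1$, so both rules reduce to the greedy procedure that scans projects in tie-breaking order and includes each one that still fits in the remaining budget. A single construction thus serves both rules. I would use the instance with budget $\budget = 9$, proposer $P_1$ owning $p_1, p_2$ of costs $3$ and $7$, proposer $P_2$ owning $q_1, q_2$ of costs $3$ and $5$, and tie-breaking order $p_1 \succ q_1 \succ p_2 \succ q_2$. Enumerating the nine profiles in $\{\{p_1\}, \{p_2\}, \{p_1, p_2\}\} \times \{\{q_1\}, \{q_2\}, \{q_1, q_2\}\}$ one checks that every profile has an improving deviation. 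The driver is that the cheap, high-priority $p_1$, when submitted, pairs with $q_1$ to fill the budget and blocks $q_2$, whereas hiding $p_1$ lets $P_2$ collect both $q_1$ and $q_2$; $P_1$ is then better off reintroducing $p_1$, and so on around a cycle.

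For \Phragmen{} the same single-voter instance is actually an NE (with one voter \Phragmen{} processes projects cheapest-first, destroying the asymmetry), so the proof uses a genuinely multi-voter party-list instance. I would take two disjoint voter groups $V_A$ of size two and $V_B$ of size one, party-$A$ projects $a_1$ (cost $5$) and $a_2$ (cost $3$), party-$B$ projects $b_1$ (cost $3$) and $b_2$ (cost $5$), budget $\budget = 10$, with $P_1$ owning $\{a_1, b_1\}$ and $P_2$ owning $\{a_2, b_2\}$. The different per-voter accumulation rates ($2$ per time unit in party $A$, $1$ in party $B$) make \Phragmen's purchase order react non-monotonically as the submitted set changes. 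Computing \Phragmen{} on each of the nine profiles shows they are all destabilised; for instance $(\{a_1, b_1\}, \{b_2\})$ yields utilities $(8, 0)$, so $P_2$ escapes to $\{a_2, b_2\}$, after which $P_1$'s best response becomes $\{a_1\}$ alone, and the cycle closes from there.

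The main obstacle, for all three rules, is choosing numerical parameters so that the ``submit everything'' profile --- which naively tends to form an NE because it simply packs the budget with the highest-priority items --- is \emph{not} a best response for at least one proposer. This forces engineering a deliberate non-monotonicity whereby adding one of one's own cheap, high-priority projects crowds out a more valuable own project. Once the right instance is found, the remaining verification is just routine arithmetic.
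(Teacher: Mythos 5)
Your proposal is correct; I checked all nine profiles of both of your instances and none is an equilibrium. For \basicAV{} and \Mes{} your route is essentially the paper's: a single voter approving everything makes both rules collapse to the greedy scan in tie-breaking order, and you then exhibit a best-response cycle. The paper's instance has three projects per proposer and first argues that submitting the two low-priority projects is weakly dominant, reducing the analysis to a $2\times 2$ matching-pennies game over $a_1$ and $b_1$; yours is smaller (two projects each) and settled by brute-force enumeration of the nine profiles. Where you genuinely diverge is \Phragmen{}: the paper recycles the \emph{same} cost structure by replacing the single voter with one disjoint supporter group of size $\cost(c)$ per project, so every project becomes affordable at time $1$, ties are resolved by the tie-breaking order, and \Phragmen{} again reduces to the greedy scan---letting the entire earlier analysis be reused verbatim. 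You instead build a fresh two-party instance ($|V_A|=2$, $|V_B|=1$) in which the purchase order is driven by cost-to-party-size ratios; this works (and conveniently produces no ties, so the unspecified tie-breaking order is harmless), but it requires an independent nine-profile verification rather than inheriting the earlier one. The paper's trick is more economical; your construction has the mild virtue of showing that the non-existence for \Phragmen{} does not hinge on all projects becoming affordable simultaneously.
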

\begin{proof} 
  For \basicAV{} and \Mes, form a game with two
  proposers, $P_1 = \{a_1, a_2, a_3 \}$ and $P_2 = \{b_1, b_2, b_3 \}$
  and the costs of the projects are:
  \begin{align*}
    \cost(a_1) = 5&,& \cost(a_2) = 4&,& \cost(a_3) = 2,\\
    \cost(b_1) = 6&,& \cost(b_2) = 4&,& \cost(b_3) = 4.
  \end{align*}
  We fix budget $\budget=14$ and we let the tie-breaking order be:
  \[
    a_1 \succ b_1 \succ b_2 \succ b_3 \succ a_2 \succ a_3.
  \]
  There is only one voter, approving all the projects, so both rules
  consider the projects in the tie-breaking order.


  First, observe that $P_1$ always (weakly) benefits from submitting
  $a_2$ and $a_3$. Indeed, if they are not elected, their presence
  does not matter. Further, $a_3$ is the last in tie-breaking, so if
  it is selected, it does not prevent any other (more valuable)
  projects of $P_1$ from being funded. Selecting $a_2$ could only
  possibly prevent $a_3$ from being selected, but
  $\cost(a_2) \geq \cost(a_3)$. By an analogous reasoning, we can see
  that $P_2$ always (weakly) benefits from submitting $b_2, b_3$.

  Now the only actual dilemma of the players is whether to submit
  $a_1$ and $b_1$. The utilities of the players from playing the
  corresponding strategies are summarized in
  \Cref{fig:no-limit-no-ne}. Note that there is no NE in this game
  (indeed, the cycle of best responses is analogous to the one in the classic \emph{matching pennies}
  game).

  \begin{figure}[t]\centering%
    \scalebox{0.8}{\nfgame{\Large{$a_1\in s_1$}
        \Large{$a_1\notin s_1$} \Large{$b_1\in s_2$~~~~~~}
        \Large{~~~~~~$b_1\notin s_2$} $7$ $6$ $0$ $14$ $5$ $8$ $6$
        $8$}}
	\caption{Normal form representation, with rows representing the choices of $P_1$ and $P_2$.}\label{fig:no-limit-no-ne}
  \end{figure}

  For \Phragmen, the exact counterexample described above does not
  work, since when all the projects have the same support, \Phragmen{}
  prefers cheaper projects (e.g., $a_2$) to more expensive ones (e.g.,
  $a_1$). But here it is enough to consider a plurality election in
  which each project $c$ has a group of $\cost(c)$ supporters. Then
  under \Phragmen{} all the projects are tied, and the reasoning
  presented above works. 
\end{proof}

Worse yet, given a project submission game it is intractable to decide
if it admits a Nash equilibrium.

\begin{restatable}{theorem}{thmnemainnegative}\label{thm:ne-main-negative}
    For each rule among BasicAV, \Phragmen{} and \Mes{},
  \textsc{NE Existence} 
   is both
  $\np$-hard and $\conp$-hard (for BasicAV and \Mes{} this result
  holds even for party-list profiles with a single voter).
\end{restatable}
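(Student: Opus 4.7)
The plan is to prove both hardness results for each of the three rules via two separate polynomial-time reductions: one from \xthreec{} (giving \np-hardness) and one from co-\xthreec{} (giving \conp-hardness). In both directions, the core idea is to paste the no-equilibrium matching-pennies gadget from the proof of \Cref{prop:NeNotExist} on top of a combinatorial ``cover'' sub-game, and tune costs so that the pennies cycle is broken precisely when (for \np-hardness) a cover exists, or precisely when (for \conp-hardness) a cover does \emph{not} exist.

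For the \np-hardness reduction, given an \xthreec{} instance $(X,F)$ with $|X|=3k$ and $F=\{F_1,\ldots,F_m\}$, I would introduce one ``set proposer'' per $F_j$, each owning a single project of a fixed cost, with the interpretation that submitting it corresponds to picking $F_j$ into the cover. Alongside, I would add ``element proposers''—one per $x_i \in X$—owning auxiliary projects whose profit structure forces each element to be covered by at least one chosen set (through a small penalty for ``uncovered'' elements). Finally, I would adjoin the two gadget proposers of \Cref{prop:NeNotExist}, with their $a_1/b_1$ decision rescaled so that the crucial gadget payoffs depend on a single threshold that is crossed iff the total cost consumed by the set proposers corresponds to exactly $k$ chosen sets (any excess or shortfall forcing an over- or under-covering). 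A valid exact cover shifts the budget so that one row (or column) of the pennies square becomes dominated, yielding a pure equilibrium; a non-cover configuration leaves the cycle active and hence destroys all equilibria. For \conp-hardness, I would modify this construction by offsetting the gadget's costs so that \emph{non-cover} configurations collapse the pennies matrix into a stable cell, while covers re-activate the cycle—equivalently, a reduction from co-\xthreec.

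For \basicAV{} and \Mes, both reductions can be realized with a single voter approving all projects and party-list preferences, since in this regime both rules degenerate to greedy selection by the tie-breaking order subject to the budget constraint, which is exactly the mechanism exploited in the single-voter half of \Cref{prop:NeNotExist}. For \Phragmen, I would use the plurality-style trick from the second half of that proof—replacing the single voter with one disjoint voter group per project, of size proportional to the project's cost—so that \Phragmen{} is forced to process projects in the intended tie-breaking order. The main obstacle is calibrating the numerical widths of ``set'' and ``element'' project costs so that exactly one bit of budget information encodes \xthreec{}-solvability and so that the cover sub-game cannot accidentally inject perturbations into the gadget's best-response structure. I expect this to be most delicate for \Mes, whose $\rho$-affordability thresholds are sensitive to how money is shared between voter groups whenever the gadget must be enlarged beyond the single-voter case; verifying that the intended cell of the matching-pennies matrix is collapsed (and no new cycles are introduced) for each rule will constitute the bulk of the argument.
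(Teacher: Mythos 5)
Your high-level architecture matches the paper's: attach the matching-pennies gadget from \Cref{prop:NeNotExist} to an NP-hard sub-instance so that the leftover budget reaching the gadget is $14$ (activating the no-NE cycle) in one case and $0$ (deactivating it) in the other. However, the concrete reduction you sketch has two genuine gaps. First, a proposer who owns a single project has exactly one legal strategy (strategies are \emph{nonempty} subsets), so your ``set proposers'' cannot encode the choice of whether $F_j$ enters the cover---every set project is always submitted, and nothing in the game reflects a selection of $k$ sets. The paper avoids this by giving \emph{all} the combinatorial projects to a single proposer $P_0$, whose strategy is literally a subset of items and whose utility maximization forces $P_0$ to solve the embedded problem; whether $P_0$ can fill the budget exactly is what determines how much money trickles down to the gadget. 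Second, your choice of \xthreec{} clashes with the single-voter regime you (correctly) want for BasicAV and \Mes{}: with one voter approving everything, those rules degenerate to greedy-by-tie-breaking subject to the budget, so the only computational structure the instance can express is numerical (costs versus budget). Set-coverage constraints, and your proposed ``element proposers with penalties for uncovered elements,'' have no realization in that regime. This is precisely why the paper reduces from \textsc{Subset Sum}: $P_0$ gets one project per number (cost proportional to it) plus a fallback project $c^*$ of cost $\budget-14$, so that a solution lets $P_0$ consume the entire budget (utility-maximal, gadget starved, NE exists), while no solution forces $P_0$ onto $c^*$, leaving exactly $14$ for the gadget and hence no NE.

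For \conp-hardness the paper's mechanism is also more concrete than your ``offsetting'' idea: it enlarges the budget by $14$ and inserts a buffer proposer with a single all-approved project of cost $28$, placed after $P_0$ in tie-breaking. If \textsc{Subset Sum} is solvable, $P_0$ leaves only $14$, the buffer does not fit, and the gadget is live (no NE); if not, $P_0$ leaves $28$, the buffer absorbs it, and the gadget gets nothing (NE exists). Finally, note that the paper does not prove the \Phragmen{} case by your plurality-group rescaling of a cost-based reduction at all; it obtains it from a separate multiwinner reduction from \textsc{3-SAT} (\Cref{thm:multiwinner-hard}) with a different, approval-structured gadget. Your plurality trick is fine for reproducing \Cref{prop:NeNotExist} for \Phragmen{}, but extending it to carry an entire covering reduction would require the same repairs as above.
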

\begin{proof}
  We give an $\np$-hardness proof for \basicAV{} and \Mes. The result for
  \Phragmen{} will follow from the proof of
  \Cref{thm:multiwinner-hard} later on.  We reduce from the
  \textsc{Subset Sum} problem, whose instance consists of a set of
  positive integers~$U$ and a target value $T$; we ask if there is a
  subset of $U$ that sums up to $T$ (we assume that every number in
  $U$ is smaller than $T$). By slightly tweaking our reductions, we
  also obtain $\conp$-hardness proofs, available in \Cref{app:proofs}.

  \textit{Construction:} Our reduction forms a PSG with three
  proposers, $P_0$, $P_1$, $P_2$. The projects of $P_1$ and $P_2$ are
   as in the proof of \Cref{prop:NeNotExist}, whereas $P_0$
  has a single project $c_i$ for each number $x_i \in U$, with
  $\cost(c_i)$ is the $i^{th}$ number in $U$. Then, $P_0$ also has project a $c^*$ with
  cost $15\cdot T-14$.  Then, $\budget = 15 \cdot T$.  There
  is a single voter approving all the projects. The tie-breaking order
  prefers all the projects of $P_0$ to the projects of $P_1$ and
  $P_2$, with $c^*$ being least preferred among the projects of $P_0$,
  and between the projects of $P_1$ and $P_2$ the tie-breaking is as
  in the proof of \Cref{prop:NeNotExist}.
  This construction is the same for both voting rules.

  \textit{Correctness:} Suppose that there exists a solution to the
  considered \textsc{Subset Sum} instance. Then, the best strategy for
  $P_0$ is to submit the projects corresponding to the choice of
  elements from $U$ summing up to $T$. Then the satisfaction of $P_0$
  is $15\cdot T$ (which is the maximal possible) and the satisfaction
  of $P_1$ and $P_2$ is $0$, regardless of their strategies. This
  strategy profile clearly is an NE.

  Suppose now that there does not exist a solution to the considered
  subset sum problem instance. Then the best strategy of $P_0$ is to
  submit only project $c^*$. Note that any other choice of projects
  would give $P_0$ utility at most $15\cdot T-15$. However, then there
  is no NE due to $P_1$ and $P_2$, as shown in the proof of
  \Cref{prop:NeNotExist}.
\end{proof}

This theorem calls for some commments.
%
First, as we show our problems to be both $\np$-hard and
$\conp$-hard, we expect their exact complexity to be on the second
level of the Polynomial Hierarchy (informally speaking, our hardness
proofs put them beyond the first level, and expressing NE existence in
the form ``there exists a strategy profile such that all deviations
are not profitable'' puts them within the second level).  However, in
the follow-up intractability proofs we generally do not aim to find
both $\np$- and $\conp$-hardness, or exact completeness results, and
focus on the intractability that is more convenient to prove.
Second, it is interesting to ask whether \Phragmen-\textsc{NE
  Existence} also is intractable for party-list profiles. 
  We leave this issue open.
%
%
%
%
Finally, we note that the proof of \Cref{thm:ne-main-negative} relies
heavily on the projects having different costs. Hence, one might hope
for polynomial-time algorithms for the multiwinner setting, where all
costs are equal. 
However, it turns out that this is the case only for \basicAV.

\begin{restatable}{theorem}{thmmultiwinnerhard}\label{thm:multiwinner-hard}
  For the multiwinner setting, we have that:
  \begin{enumerate}
    \item Every \basicAV-PSG has an NE computable in polynomial time.
    \item 
  \Phragmen-\textsc{NE
        Existence} and \Mes-\textsc{NE
        Existence} 
      are both $\np$-hard and $\conp$-hard .      
  \end{enumerate}  
\end{restatable}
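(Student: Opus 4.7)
My plan is to prove the two parts separately.

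For part 1, I claim that every proposer has a weakly dominant strategy, namely to submit all of her projects; the profile in which every proposer does so is therefore an NE and is computable in linear time. To see dominance, fix any profile $\stratprof_{-i}$ of the other proposers, and compare the committees obtained by $P_i$ submitting some $S_i \subseteq P_i$ versus $S_i \cup \{c\}$. Under multiwinner \basicAV{} the outcome is simply the top-$\budget$ projects among those submitted, ordered by approval score and by the tie-breaking order $\succ$; crucially, since voters' approvals are independent of which projects are submitted, the score of $c$ is independent of the strategy profile. Adding $c$ either leaves the committee unchanged (if $c$ does not enter the top $\budget$) or inserts $c$ and displaces exactly one project; the displaced project either belongs to $P_i$ (so $P_i$'s utility is unchanged) or to another proposer (so $P_i$'s utility strictly increases). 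Hence submitting the full set of owned projects weakly dominates every strict subset.

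For part 2, my plan is to preserve the three-proposer architecture used in the proof of \Cref{thm:ne-main-negative} but rebuild its cost-based gadgets with purely approval-based ones so that they survive the unit-cost restriction. I would reduce from \textsc{X3C} (and from its complement for co-$\np$-hardness) and construct a ``solver'' proposer $P_0$ together with two ``matching-pennies'' proposers $P_1, P_2$. The intended correspondence is that a yes-instance gives $P_0$ a submission strategy that causes \Phragmen{} (resp.\ \Mes{}) to fill the entire committee with her projects, making the strategies of $P_1, P_2$ payoff-irrelevant and any completion an NE; a no-instance forces some seats to be taken by projects of $P_1$ and $P_2$, reducing the sub-game between them to a matching-pennies interaction whose best-response graph is a cycle, thereby precluding any NE.

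The main obstacle will be building a unit-cost matching-pennies gadget for \Phragmen{} and, independently, for \Mes{}, since the cost-manipulation used in \Cref{prop:NeNotExist} is no longer available. For \Phragmen{} I would exploit the rule's time-based selection order by designing overlapping voter groups whose bank-account balances are either reset or preserved depending on which of four specific projects are submitted, so that submitting or withdrawing a key project reverses the order in which the remaining projects are purchased. For \Mes{} the analogous device is to tune supporter-set sizes so that the $\rho$-threshold at which a project becomes affordable flips between the two proposers depending on what is submitted. Once such a gadget exists, $P_0$'s instance-encoding projects and voters are attached so that she can monopolise the committee exactly when the \textsc{X3C} instance is positive, leaving only the matching-pennies residue otherwise.
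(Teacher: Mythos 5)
Your argument for part~1 is correct and is essentially the paper's: in the multiwinner setting \basicAV{} simply takes the $\budget$ highest-scoring submitted projects, scores do not depend on what else is submitted, so submitting everything is weakly dominant and the all-submit profile is an NE.

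Part~2, however, is a plan rather than a proof, and what is deferred is precisely the mathematical core of the statement. Everything hinges on exhibiting, for each of \Phragmen{} and \Mes{}, a concrete unit-cost sub-game between two proposers whose best-response graph is a cycle, together with a mechanism that activates or deactivates that sub-game depending on the encoded instance. You explicitly postpone both: ``designing overlapping voter groups whose bank-account balances are either reset or preserved'' describes an intention, not a construction, and for these rules such a gadget is not routine (the paper's gadget uses $48$ voters in carefully sized overlapping blocks of $6$ and $9$, budget $4$, and a four-case analysis of purchase times to verify the cycle). Beyond the missing gadget, your ``monopolization'' architecture has a quantitative tension it does not confront: under \Phragmen{} and \Mes{} a project with a larger supporter group is affordable earlier, and a matching-pennies gadget needs fairly large overlapping supporter groups to create the cycle, whereas natural \textsc{X3C}-encoding projects have only three supporters each. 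In a yes-instance the gadget projects would therefore be purchased \emph{before} the solver's projects, and the committee would not be monopolized; making this work requires rebalancing all group sizes and tie-breaking priorities against one another (the paper's reduction, which uses \textsc{3-SAT} rather than \textsc{X3C}, does this by giving every non-gadget project exactly six supporters and installing $\budget$ ``artificial'' projects bought at time $\nicefrac{1}{6}$ as a hard deadline). Finally, for the no-instance direction you must argue that \emph{every} best response of $P_0$ leaves the gadget active and that $P_0$'s best responses do not depend on what $P_1,P_2$ play; this is asserted but not arranged for. None of this looks unfixable---your architecture mirrors the paper's \textsc{Subset Sum} reduction for the cost-based case---but as written the hardness claims in part~2 are not established.
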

\begin{proof}
Regarding BasicAV, it is enough to note that in the multiwinner
  model it boils down to electing $\budget$ projects with the highest
  number of approvals (up to
  tie-breaking). 
  So, it is clear that submitting all the projects 
  is a dominant strategy of every
  proposer.




  Next, we show $\conp$-hardness of \Phragmen-\textsc{NE Existence}.
  The constructions for the $\np$-hardness case and for \Mes\ are similar and are presented in
  \Cref{app:proofs}. We reduce from the \textsc{3-SAT} problem, where
  we are given a propositional logic formula in conjunctive normal form (with
  three literals per clause) and we ask if it is satisfiable.\medskip

  \noindent\textit{The Gadget:} We first show a \Phragmen-PSG for the
  multiwinner setting that does not admit an NE. There are $48$ voters
  and 3 proposers, $P_1 = \{p_1, p_2, p_3 \}$,
  $P_2= \{q_1, q_2, q_3 \}$, and $P_3=\{r_1, r_2\}$. All the projects
  cost $1$, and the budget is $4$. The voters' preferences are
  depicted in \Cref{fig:mes-laminar-none}, with $r_1, r_2$ being
  lowest in the tie-breaking order (the remaining part of the
  tie-breaking order is irrelevant).

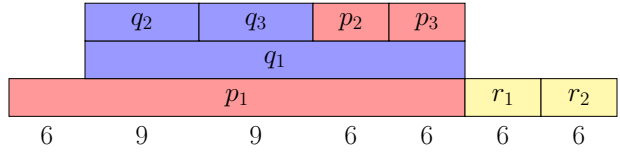
\begin{figure}[t]
\centering
  \scalebox{0.5}{\begin{tikzpicture}

  \node at (1, -0.5) {\huge{$6$}};
  \node at (3.5, -0.5) {\huge{$9$}};
  \node at (6.5, -0.5) {\huge{$9$}};
  \node at (9, -0.5) {\huge{$6$}};
  \node at (11, -0.5) {\huge{$6$}};
  \node at (13, -0.5) {\huge{$6$}};
  \node at (15, -0.5) {\huge{$6$}};

    \draw [fill=red!40!white] (0,0) rectangle (12,1);
    \draw [fill=yellow!40!white] (12,0) rectangle (14,1);
    \draw [fill=yellow!40!white] (14,0) rectangle (16,1);
   
    

 \draw [fill=blue!40!white] (2,1) rectangle (12,2);

\draw [fill=blue!40!white](2,2) rectangle (5,3);
\draw [fill=blue!40!white] (5,2) rectangle (8,3);

\draw [fill=red!40!white](8,2) rectangle (10,3);
\draw [fill=red!40!white]  (10,2) rectangle (12,3);

 
  




    \node at (6, 0.5) {\huge{$p_1$}};
    \node at (7, 1.5) {\huge{$q_1$}};
    \node at (3.5, 2.5) {\huge{$q_2$}};
  \node at (6.5, 2.5) {\huge{$q_3$}};
  \node at (9, 2.5) {\huge{$p_2$}};
  \node at (11, 2.5) {\huge{$p_3$}};
  \node at (13, 0.5) {\huge{$r_1$}};
  \node at (15, 0.5) {\huge{$r_2$}};
  \end{tikzpicture}}
\caption{An illustration of the gadget for the proof of
  \Cref{thm:multiwinner-hard}. Projects of different proposers are
  depicted as boxes of different colors. Sets of voters of given
  cardinalities support projects shown above them (e.g., there are 9
  voters who approve set
  $\{p_1,q_1,q_2\}$).}\label{fig:mes-laminar-none}
\end{figure}

  It is clear that $P_3$ always (weakly) benefits from submitting
  $r_1, r_2$, $P_1$ (weakly) benefits from submitting $p_2, p_3$ and
  $P_2$ (weakly) benefits from submitting $q_2, q_3$ (this is so,
  because for each of the proposers, these are the last projects that
  \Phragmen{} would---if at all---select for them, and they are
  approved by disjoint voter sets). Hence, the only actual dilemma is
  whether $P_1$ should submit $p_1$ and whether $P_2$ should submit
  $q_1$. In \Cref{app:proofs} we show that there is no NE in this game
  with respect to these two strategies of $P_1$ and $P_2$, and that if
  we removed two voters supporting $p_2$ and $p_3$ respectively, there
  there would be an NE for $P_1$ submitting $p_1$ and $P_2$ not
  submitting $q_1$.\medskip

  \noindent\textit{Construction:} Consider a \textsc{3-SAT} formula $\varphi$
  with $k$ clauses and $w$ variables.  We form a \Phragmen-PSG as
  follows. First, we include a copy of the above-described
  gadget. Second, we add $3w+k+6$ new projects (each with cost $1$):
  \begin{enumerate}
  \item For each variable $x$, we form two \emph{literal projects}
    $c_{x}, c_{\neg x}$.
  \item For each clause, we form a corresponding clause project.
  \item We include one \emph{extra project} $c^*$.
  \item We include $w+5$ \emph{artificial projects}.
  \end{enumerate}
  Tie-breaking is as follows (the exact order within project sets is
  irrelevant for the proof):
  \[
  \substack{\mathit{literal}\\ \mathit{projects}} \succ
  \substack{\mathit{clause}\\ \mathit{projects}} \succ c^* \succ
  \substack{\mathit{gadget}\\ \mathit{projects}} \succ
  \substack{\mathit{artificial}\\ \mathit{projects.}}
  \]
  In addition to the proposers from the gadget, we also have:
  \begin{enumerate}
  \item Proposer $L$, who owns all the literal projects and $c^*$,
  \item Proposer $K$, who owns all the clause projects, and
  \item Proposer $R$, who owns all the artificial projects.
  \end{enumerate}
  Altogether, we have $48+13w+36$ voters, partitioned into the
  following groups:
  \begin{enumerate}
  \item We have $48$ voters from the gadget, with approval sets as in
    the gadget, except that one of the voters approving $p_2$ and one
    of the voters approving $p_3$ also approve all the clause
    projects.
  \item For each variable $x$, we have $7$ voters, with $5$ of them
    approving both $c_x, c_{\neg x}$, one voter, called $v_x$,
    approving $c_x$, and one voter, called $v_{\neg x}$, approving
    $c_{\neg x}$. Additionally, $v_x$ and $v_{\neg x}$ approve all the
    clause projects that correspond to clauses containing
    corresponding literals.  
  \item For each artificial project we have a group of $6$ voters
    approving it.
  \item We have $6$ voters approving $c^*$. One of them also approves
    all the clause projects.
\end{enumerate}



  Importantly, each non-gadget project is approved by a group of exactly
  $6$ voters. We set budget $\budget=w+5$.\medskip

  \noindent\textit{Correctness:} First, observe that
  each artificial project is approved by a disjoint group of voters who
  do not approve any of the other projects. Consequently, it is always
  (weakly) profitable for $R$ to submit all of them. Hence, if the
  game has an NE, it has one where this happens, and we only consider
  strategy profiles of this form.  As there are $\budget$ artificial
  projects and their voters can purchase them at moment
  $\nicefrac{1}{6}$, if some other project becomes affordable at a
  later time, it is not funded. In particular, this means that each
  non-gadget project $c$ (and also gadget projects $p_2, p_3$) can
  only be selected if no other project $c'$ whose set of approving
  voters intersects with that of $c$ is selected.

  From the above observation, we immediately obtain that the following
  pairs of projects cannot be bought together: (i) any two clause
  projects, (ii) two literal projects corresponding to the same
  variable, (iii) any clause project and $c^*$, (iv) any clause project
  and either $p_2$ or $p_3$, (v) a clause project and a literal
  project such that the corresponding literal appears in the
  corresponding clause.

  On the other hand, for each pair of literals corresponding to
  different variables, one is guaranteed to be selected, provided that
  $L$ submits it (this holds because literal projects are preferred in
  the tie-breaking order and are supported by six voters each; only
  $p_1$, $q_1$, $q_2$ and $q_3$ gadget projects are supported by a
  larger number of voters).

  Suppose that there exists a valuation of variables satisfying
  $\varphi$. Then there is no NE in the constructed game. Indeed, in a
  strategy maximizing the utility of proposer $L$, they submit
  the projects corresponding to the literals satisfying $\varphi$, as
  well as $c^*$. Since literal projects are preferred to the clause
  ones in tie-breaking, their voters buy them before any of the clause
  projects can be purchased. Then, as per item (v) above and
  independently of the strategy of $K$, no clause project can ever
  selected. Then $c^*$ is bought, giving $w+1$ utility to $L$ (which
  is the maximal possible utility that $L$ can achieve). Note that if
  $K$ submitted all their clause projects and $L$ played any other
  strategy, not corresponding to the valuation satisfying $\varphi$,
  then at least one clause project---corresponding to the not
  satisfied clause---would be elected, after which $c^*$ would not be
  elected (which means that $L$ gets only $w$ utility). Finally, there
  are $4$ seats in the committee that we did not discuss yet. Since no
  clause project is bought, the gadget is separated from the remaining
  voters, and the analysis in \Cref{app:proofs} showing that there is
  no NE with respect to $P_1$ and $P_2$ holds.

  Suppose now that the $\varphi$ is not satisfiable. Then let $L$
  submit any $w$ literal projects corresponding to different
  variables, $P_1$ submit $\{p_1, p_2, p_3\}$, $P_2$ submit
  $\{q_2, q_3\}$ and $P_3$, $R$, and $K$ submit all their
  projects. Then \Phragmen{} selects successively the projects of
  $P_1$ and $P_2$ after which it selects the literal projects
  submitted by $L$, one of the clause projects submitted by $K$
  (corresponding to a non-satisfied clause) and the projects of
  $P_3$. Since $\varphi$ is not satisfiable, it is impossible for $L$
  to ``block" the election of all the clause projects, which means
  that it is impossible for $c^*$, $p_2$, and $p_3$ to be selected. It
  is then clear that none of the proposers can deviate to improve
  their utility. Hence, this is an NE.
\end{proof}

Nevertheless, the multiwinner setting guarantees the existence of
equilibria when voters' preferences are party-list.

\begin{theorem}\label{thm:ne-phragmen-mes-party-laminar}
  Every \Phragmen{}- and \Mes{}-PSG,  
for the multiwinner setting with party-list preferences, admits a Nash equilibrium.
\end{theorem}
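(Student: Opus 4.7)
My plan is to prove that the \emph{full submission profile} $\stratprof^*$, in which every proposer submits all of their projects, is a Nash equilibrium for both \Phragmen{} and \Mes{} in the multiwinner party-list setting. The approach rests on the decomposition induced by party-list preferences: the project set $C$ splits into groups $C_1, \ldots, C_q$ approved by pairwise-disjoint voter sets $V_1, \ldots, V_q$ of sizes $n_{\pi_1}, \ldots, n_{\pi_q}$, and under both rules the selection decomposes across these groups. Writing $k_j$ for the number of projects from $C_j$ submitted under a given profile, each party $\pi_j$ receives some number of seats $\lambda_j$ that depends only on $(k_1, \ldots, k_q)$ (together with $n$, $\budget$, and the party sizes), and the actual winners in $C_j$ are precisely the top $\lambda_j$ submitted projects in the tie-breaking order. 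For \Mes{} one has $\lambda_j = \min(k_j, K_j)$ for a party-specific cap $K_j$ depending only on the party sizes and $\budget$, and \emph{no} seats transfer between parties. For \Phragmen, $\lambda_j$ equals the number of purchase times $k/n_{\pi_j}$ ($1 \leq k \leq k_j$) that fall among the $\budget$ smallest purchase times overall (with the project tie-breaking resolving simultaneities), so unused capacity can migrate to other parties.

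The technical core will be a \emph{single-removal lemma}: starting from any profile $\stratprof$, if $P_i$ removes one submitted project $c \in s_i \cap C_j$, then $u_i$ weakly decreases. For \Mes{} this will be essentially direct: $k_j$ drops by one, so $\lambda_j$ either stays unchanged (then either $c$ was not in the top $\lambda_j$ and nothing changes, or the old $(\lambda_j{+}1)$-st project of $C_j$ replaces $c$) or drops by one (then $P_i$ simply loses $c$); since \Mes{} does not redistribute capacity across parties, the net change to $u_i$ is $-1$ or $0$. For \Phragmen{} I will split on whether the last purchase time of $\pi_j$, namely $k_j/n_{\pi_j}$, lies among the $\budget$ smallest purchase times. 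If it does not (i.e., $\lambda_j < k_j$), then $\lambda_j$ is preserved, the within-party swap replaces $c$ by the old $(\lambda_j{+}1)$-st project of $C_j$ (when $c$ was winning), and no other party changes, so $P_i$'s net change is $-1$ or $0$. If it does (i.e., $\lambda_j = k_j$), then $\lambda_j$ drops by one and the freed seat migrates to exactly one \emph{other} party $\pi_{j'}$ (since $\pi_j$ has no further purchase times), whose $(\lambda_{j'}{+}1)$-st project now enters the committee; $P_i$ loses $c$ in $\pi_j$ but may gain at most one project in $\pi_{j'}$, again netting to $-1$ or $0$. The hard part here will be the careful bookkeeping for \Phragmen: the reordering of within-party purchase times when $c$ does not sit at the tail of the tie-breaking of $C_j$, and the treatment of ties between simultaneous purchase times across parties.

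Finally, any alternative strategy $s_i \subsetneq P_i$ is reached from $P_i$ by a sequence of single removals, and the single-removal lemma applies at each intermediate profile because its proof uses only the current submissions, not any property of the starting profile. Chaining the inequalities gives $u_i(s_i, \stratprof_{-i}^*) \leq u_i(P_i, \stratprof_{-i}^*)$ for every proposer and every deviation, so $\stratprof^*$ is an NE.
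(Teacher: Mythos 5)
Your proposal is correct and follows essentially the same route as the paper's proof: take the full-submission profile, reduce any deviation to a chain of single-project withdrawals, and show each withdrawal weakly decreases utility via the same case analysis (within-party replacement, cross-party migration for \Phragmen{}, or a smaller outcome for \Mes{}). Your version just makes the paper's terse argument more explicit by working out the seat counts $\lambda_j$ and purchase times for party-list profiles.
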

\begin{proof}

%
  Let all the proposers submit all their projects. We now show that
  such a strategy profile $\stratprof$ is an NE. Suppose that it is
  not the case. Then there exists a proposer $P_i$ who has an
  incentive to withdraw some of their projects. Suppose that
  $P_i$ performs such a withdrawal one-by-one. If $P_i$ withdraws a
  project $c$ that is not selected, the outcome and their utility
  does not change. However, if $P_i$ withdraws a selected project $c$,
  then the only change in the outcome is the removal of $c$ and either
  (1) adding some other project $c'$ from the same ``party'' (i.e., a
  project supported by the same voters), or (2) adding some other
  project $c'$ from another ``party'' (only for \Phragmen{}), or (3) electing an outcome that has one project fewer (only
  for \Mes). In all three cases, $P_i$ either does not gain anything
  (if $c'$ is a project of $P_i$) or loses utility (if $c'$ belongs to
  another proposer, or the rule selects fewer projects).
\end{proof}



The requirement for the preferences to be party-list in \Cref{thm:ne-phragmen-mes-party-laminar} is quite strict---for example, there exist multiwinner \Mes{}- and \Phragmen-PSGs with no NE where voters' preferences belong to a more general class of \emph{laminar} preferences,\footnote{Preferences are 
\emph{laminar}~\citep{pet-sko:c:welfarism-mes} if for each pair of projects $a$, $b \in C$ we have
that either (i)~$S(a) \subseteq S(b)$, or (ii)~$S(b) \subseteq S(a)$,
or (iii)~$S(a) \cap S(b) = \emptyset$.} as in case of PSGs used as gadgets in the proof of \Cref{thm:multiwinner-hard}.

In principle, even if testing whether a Nash equilibrium exists is
hard, it may still be possible to efficiently compute a best response
for a given strategy profile. However, in case of PSGs the tractability of computing best responses behaves in the same way as in case of computing Nash equilibria (which mostly follows from respective proofs for \textsc{NE Existence}).

\begin{restatable}{theorem}{thmbestresponse}\label{thm:best-response}
  \textsc{Best Response} is:
    \begin{enumerate}
        \item $\np$-hard for BasicAV, \Phragmen{}, and \Mes,  in the general PB model, even with party-list preferences,
        \item Polynomial-time solvable for \basicAV\ in the multiwinner model,
        \item $\np$-hard for \Mes\ and \Phragmen\ in the multiwinner model,
        \item Polynomial-time solvable for \Mes\ and \Phragmen\ in the multiwinner model with party-list preferences.
    \end{enumerate}
\end{restatable}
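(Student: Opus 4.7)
The plan is to handle the four parts separately, leveraging the constructions and observations developed for the earlier \textsc{NE Existence} results.

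For the hardness parts (1 and 3), I would adapt the reductions already used. In the proof of \Cref{thm:ne-main-negative}, proposer $P_0$'s role was essentially to solve \textsc{Subset Sum}---exactly a best-response question once the strategies of $P_1,P_2$ are fixed to their natural choices (say, submitting everything) and the threshold is set to $x=15\cdot T$. For \Phragmen{} with party-list preferences, the same reduction works after replacing the single-voter gadget by the plurality-style party-list gadget from the final paragraph of the proof of \Cref{prop:NeNotExist}, where each project $c$ gets its own disjoint group of $\cost(c)$ supporters so that \Phragmen{} mimics \basicAV{}. For Part 3, I would reuse the 3-SAT reduction from the proof of \Cref{thm:multiwinner-hard}: fixing the gadget proposers and the auxiliary proposer $R$ to their natural strategies and asking whether proposer $L$ has a strategy attaining utility $w+1$ is equivalent to deciding whether $\varphi$ is satisfiable.

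For the tractable parts (2 and 4), I would show that submitting all of $P_i$'s projects is a weakly best response, regardless of $\stratprof_{-i}$. For \basicAV{} in the multiwinner model this is immediate: the rule picks the $\budget$ submitted projects highest in the (score, tie-breaking) ranking, and adding a project of $P_i$ to the submission can only displace some non-$P_i$ project from the top $\budget$. For \Mes{} and \Phragmen{} in multiwinner party-list instances, the case analysis from the proof of \Cref{thm:ne-phragmen-mes-party-laminar} applies to any unilateral deviation of $P_i$ and does not depend on the others' strategies: adding a project of $P_i$ that is ultimately elected either replaces another project of $P_i$ in the same party (utility unchanged), replaces a project of another proposer (strict gain), or fills an otherwise unfilled seat in the case of \Mes{} (strict gain). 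The algorithm therefore simulates the rule once on $(\stratprof_{-i}, P_i)$ with $P_i$ submitting everything, and reports whether the utility reaches $x$.

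The main obstacle is the \Phragmen{} case of Part 1, since the author's proof of \Cref{thm:ne-main-negative} sidestepped party-list preferences by deferring to the multiwinner construction. Verifying that the plurality-style party-list gadget of \Cref{prop:NeNotExist} faithfully emulates \basicAV{}'s tie-breaking behavior under \Phragmen{}'s budget-accumulation dynamics---so that the reduction from \textsc{Subset Sum} carries through once the Subset Sum projects are each equipped with $\cost(c)$ dedicated supporters---is the crux of that step.
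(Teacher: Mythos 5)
Your handling of Points 1, 2, and 4 coincides with the paper's: Point 1 is obtained by asking for $P_0$'s best response in the \textsc{Subset Sum} construction of \Cref{thm:ne-main-negative} (your explicit treatment of the \Phragmen{} case via the plurality-style party-list gadget, giving each project $\cost(c)$ dedicated supporters so that all projects tie at time $1$ and \Phragmen{} degenerates to greedy selection in tie-breaking order, is in fact more careful than the paper's one-line remark), and Points 2 and 4 follow, as you argue, from the observation that submitting all projects is a weakly dominant strategy, so a single simulation of the rule suffices. Where you genuinely diverge is Point 3. The paper does not reuse the 3-SAT machinery of \Cref{thm:multiwinner-hard}; instead it gives a fresh, self-contained reduction from \xthreec{}: a single proposer owns one unit-cost project per $3$-set, each universe element is a voter approving the projects of the sets containing it, and the budget is $t$, so that under \Mes{} the proposer reaches utility $t$ exactly when every voter spends their entire $\nicefrac{1}{3}$ endowment, i.e., when an exact cover exists (for \Phragmen{} one adds a dummy project with three dedicated voters, last in tie-breaking, to penalize any purchase occurring after time $\nicefrac{1}{3}$). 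Your route—fixing $K$, $R$, and the gadget proposers and asking whether $L$ can reach utility $w+1$—does reduce from 3-SAT and is supported by the correctness analysis already carried out for \Cref{thm:multiwinner-hard}, but it inherits all of that construction's delicacy (the timing interaction between the gadget voters who also approve clause projects and the artificial projects' $\nicefrac{1}{6}$ deadline), and you would still owe a separate argument for the \Mes{} variant of the construction. The paper's X3C reduction buys a much shorter and more robust proof at the cost of introducing one more source problem; your reuse buys economy of constructions but concentrates all the risk in the single most fragile gadget of the paper.
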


\section{Single-Project Submission Games}\label{sec:psg1}

So far, we have assumed that each proposer can submit any nonempty
subset of their projects. However, sometimes this is not realistic.
For example, a proposer may have several different variants of a
project, such as a bike path that goes through several overlapping
routes. In a multiwinner election, it may be customary---or
enforced by the rules---that a single institution nominates only one
candidate. Hence, in this section we focus on games that capture such
scenarios.


Formally, a \emph{single-project submission game} for rule $f$
($f$-PSG/1) is defined in the same way as a regular $f$-PSG, except
that each proposer submits exactly one
project. 
%
Further, we refer to variants of $f$-\textsc{NE Existence} and
$f$-\textsc{Best Response} for PSG/1s as, respectively,
$f$-\textsc{Single-Project NE Existence} and
$f$-\textsc{Single-Project Best Response}.

PSG/1s are notably simpler than our full-fledged PSGs. In particular,
in the multiwinner setting not only Nash equilibria always exist for all the rules, but
also they can be computed in polynomial time. We believe that this
setting is quite realistic and, hence, we consider this result
important.\footnote{E.g., in the academic system of the country where
  some of the authors of this paper come from, every few years there
  is an election of a Research Excellence Council that decides
  professors' promotions. For each academic discipline, each
  university department may nominate a candidate and BasicAV is used
  to select three council members for this discipline. Consequently,
  the departments play a PSG/1, as they need to consider the actions
  of other departments.}

\begin{theorem}\label{thm:ne-exists-for-limit-1}
  For BasicAV, \Phragmen{}, and \Mes{} 
  every PSG/1 for the multiwinner setting has a polynomial-time
  computable
  NE. 
\end{theorem}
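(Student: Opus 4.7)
The plan is to handle each of the three rules separately, starting with the easy case of BasicAV and proceeding to the more intricate cases of Phragmén and MES.

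For BasicAV, I would show that each proposer has a weakly dominant strategy: submit any project of theirs with the highest approval score (breaking ties via the global tie-breaking order). Indeed, in the multiwinner setting BasicAV simply elects the $k$ submitted projects with the highest approval. Hence, for any proposer $P_i$ and any two projects $a, b \in P_i$ with $|S(a)| \geq |S(b)|$, replacing $P_i$'s submission $b$ with $a$ can only push $P_i$'s entry up in the ranking, so $a$ is elected whenever $b$ would have been. The profile in which every proposer plays this dominant strategy is therefore a NE, and it is trivially computable in polynomial time.

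For Phragmén and MES, the plan is to argue by best-response dynamics. Start from any initial profile (for concreteness, the one where each proposer submits the project of theirs with the highest approval). Iterate the following step: whenever a proposer $P_j$ is losing but has an alternative project $c_j'$ that would be elected if submitted (given the others' fixed strategies), let $P_j$ deviate to $c_j'$. At any fixed point of this process, no proposer has a profitable deviation, so the profile is a NE.

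To bound the number of iterations, I would introduce a potential function $\phi(\vec{s})$ defined from the output of the rule on profile~$\vec{s}$: for Phragmén, $\phi(\vec{s})$ is the tuple of buy times $(t_{(k)}, t_{(k-1)}, \ldots, t_{(1)})$ of the $k$ elected projects, sorted in decreasing order (with the tie-breaking priority appended coordinate-wise to break ties between equal times); for MES, use the sequence of $\rho$-values instead. The key claim is that each profitable deviation strictly decreases $\phi$ in the lexicographic order. The reasoning is a monotonicity argument: adding $c_j'$ to (and removing the never-bought $d_j$ from) the submission cannot delay any purchase already made in the old run, while the fact that $c_j'$ becomes elected means that its buy time in the new run strictly precedes the buy time of some previously elected project; in particular, the latest-bought project in the new outcome must have a buy time (or $\rho$-value) no later than the latest in the old outcome, and strictly earlier whenever $c_j'$ itself is not the new latest. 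Hence $\phi$ strictly decreases lexicographically; since the set of possible $\phi$-values is determined by the finite set of elected subsets (with the rule deterministic given tie-breaking), BR dynamics cannot cycle, and a standard bookkeeping argument, combined with the fact that each step can be detected and executed in polynomial time, yields the desired polynomial-time algorithm.

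The main obstacle is establishing the monotonicity claim rigorously for Phragmén and MES. A single submission change can cascade through the rule's execution because purchases drain the accounts of voters shared with other candidates, reshuffling subsequent buy times. Handling this requires coupling the old and new runs carefully: one argues inductively that, up to the moment $c_j'$ first enters the sequence, both runs have executed the same purchases in the same order (since $d_j$ was never bought), so $c_j'$'s buy time in the new run equals its buy time in the old run had $c_j'$ been added; from then on, one tracks how the displacement of the latest previously elected project bounds the lex-improvement in $\phi$. A careful tie-breaking convention (e.g., lexicographic extension of $\succ$ to pairs (time, project identifier)) ensures strict decrease even in degenerate cases.
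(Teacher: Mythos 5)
Your BasicAV argument is fine and is essentially a special case of the paper's reasoning. The trouble is in the Phragm\'en/MES part, where your best-response-dynamics-plus-potential plan has two genuine gaps. First, the monotonicity claim on which everything rests is false for the potential you define. Take a multiwinner Phragm\'en-PSG/1 with budget $2$: proposer $P_a$ owns only $a$ with $S(a)=\{v_1\}$, proposer $P_b$ owns only $b$ with $S(b)=\{v_2\}$, and proposer $P_j$ owns $d_j$ with $S(d_j)=\{v_4\}$ and $c_j'$ with $S(c_j')=\{v_1,v_2\}$, with tie-breaking $a\succ b\succ d_j$. If $P_j$ submits $d_j$, the outcome is $\{a,b\}$ bought at times $(1,1)$ and $P_j$ loses; deviating to $c_j'$ is profitable ($c_j'$ is bought at $\nicefrac{1}{2}$), but it resets $v_1$ and $v_2$, so the second elected project is now bought at $\nicefrac{3}{2}$. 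The decreasing-sorted tuple of buy times goes from $(1,1)$ to $(\nicefrac{3}{2},\nicefrac{1}{2})$, which is a lexicographic \emph{increase}; your assertion that ``the latest-bought project in the new outcome must have a buy time no later than the latest in the old outcome'' is exactly what fails, because purchases occurring \emph{after} $c_j'$ can be delayed arbitrarily by the account resets. (Sorting in increasing order instead would survive this example, since the runs coincide up to the purchase of $c_j'$, but you would need to prove that version.) Second, even with a correct potential, ruling out cycles does not give a polynomial-time algorithm: the number of distinct potential values along a trajectory is not bounded by any polynomial a priori, so you have no bound on the number of best-response iterations, and the theorem claims polynomial-time computability.

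For comparison, the paper avoids dynamics entirely. It uses the prefix-independence of all three sequential rules (deleting a project not among the first $i$ selections does not change those selections, which in the multiwinner setting holds because no project is dropped before the budget is full) to argue that the project selected in round $1$ when everyone submits everything is selected in round $1$ of \emph{any} profile containing it; hence submitting it is dominant for its owner, who thereby secures the maximum utility of $1$. Fixing that choice and recursing round by round yields an NE directly in at most $\budget$ polynomial-time stages. If you want to keep your two-part structure, you would need either to carry out this kind of prefix argument for Phragm\'en and MES, or to repair the potential (increasing order plus a tie-breaking refinement) \emph{and} supply a genuine polynomial bound on the number of deviations --- the latter being the harder missing piece.
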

\begin{proof}
  Note that all our rules are sequential and satisfy the
  following condition: If $W$ is the outcome computed in the first $i$
  rounds and $c$ is a project not in $W$, then if we removed $c$
  from the election and rerun the rule, the first $i$ rounds would
  select the same projects as with $c$ present.


  Suppose for a while that there is no limit on the number of submitted projects and let the
  proposers submit all their projects. Consider the first round. Here,
  project $c_1$ of some proposer $P_1$ is selected. By the above
  observation, we know that $c_1$ would also be selected if it were
  the only project submitted by $P_1$. As no proposer can obtain
  utility higher than $1$ in a multiwinner PSG/1, submitting $c_1$ is
  a dominant strategy for $P_1$. Let us then fix the strategy of $P_1$ as $\{c_1\}$, remove the remaining projects of $P_1$ from the election and rerun the rule. From our observation, $c_1$ is still elected in the first round.
    
  We now repeat analogous reasoning for projects chosen in further rounds, each time
  finding a dominant strategy of a new proposer $P_i$ (subject to the
  fixed choices of the proposers 
  who got their projects selected in previous iterations). In the end, we choose arbitary legal strategies for the
  proposers who do not have any projects selected. The presented algorithm runs in polynomial time and finally returns an NE profile. 
%
\end{proof}

However, this result does not carry over to the general PB case
with arbitrary prices.


\begin{restatable}{proposition}{propNeNotExistLimit}\label{prop:NeNotExistLimit}For each rule among BasicAV, \Phragmen{} and \Mes{}, there is a PSG/1
 with two proposers that does not
  admit any NE, where the voters have party-list preferences
  (for the case of BasicAV and \Mes{} even with a single voter).

\end{restatable}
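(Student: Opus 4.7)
The plan is to adapt the cycle-based idea from the proof of \Cref{prop:NeNotExist} to the single-project submission setting. Since in a PSG/1 each proposer picks exactly one project, a direct reuse of the earlier example would not work (restricting the $3 \times 3$ matrix to single submissions collapses it to a case with a Nash equilibrium). Instead, I would engineer a smaller game with two proposers $P_1 = \{a_1, a_2\}$ and $P_2 = \{b_1, b_2\}$, whose induced $2 \times 2$ payoff matrix has a matching-pennies-like cyclic best-response structure.

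For \basicAV{} and \Mes{} I would use a single voter approving all projects, take $\cost(a_1) = 3$, $\cost(a_2) = 4$, $\cost(b_1) = 4$, $\cost(b_2) = 2$, set $\budget = 6$, and fix the tie-breaking order $a_1 \succ b_1 \succ a_2 \succ b_2$. With one voter both rules coincide: they scan the submitted projects in the tie-breaking order and add each one that fits in the remaining budget (for \Mes{}, the voter starts with $\budget$ currency, so every submitted project is $1$-affordable initially, ties at $\rho = 1$ are broken by the tie-breaking order, and a project becomes unaffordable precisely when it no longer fits in the remaining balance, matching \basicAV{}). A short case analysis over the four strategy profiles yields the payoff matrix
\[
\begin{array}{c|cc}
 & b_1 & b_2 \\ \hline
 a_1 & (3, 0) & (3, 2) \\
 a_2 & (0, 4) & (4, 2)
\end{array}
\]
from which I would read off the cycle $(a_1, b_1) \to (a_1, b_2) \to (a_2, b_2) \to (a_2, b_1) \to (a_1, b_1)$ of strict unilateral improvements, showing that no profile is a Nash equilibrium.

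For \Phragmen{}, a single-voter profile is insufficient since \Phragmen{} prefers cheaper projects among equally-supported ones. Following the device from the proof of \Cref{prop:NeNotExist}, I would instead take a party-list profile in which each project $c$ is supported by a dedicated, disjoint group of exactly $\cost(c)$ voters. Then the supporters of each $c$ accumulate virtual money at rate $\cost(c)$, so every project becomes affordable precisely at time $t = 1$, ties are resolved by the tie-breaking order, and purchases do not affect the affordability of the remaining projects. Hence \Phragmen{} replays the \basicAV{} behavior on the submitted projects, giving the same matrix and the same cyclic best-response structure.

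The main obstacle is choosing the costs so that all four cells are unstable: one needs to simultaneously arrange (i) that submitting the two highest-ranked projects overshoots the budget, dropping the second-ranked one and giving zero utility to one player, while (ii) each player's cheap-companion switch alters who is dropped, closing the cycle. Once a concrete configuration such as the one above is identified, the verification is a routine four-profile check; the subtlety lies not in the calculations but in finding the configuration itself.
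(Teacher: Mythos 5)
Your proposal is correct and takes essentially the same route as the paper's proof: a two-proposer, two-projects-each PSG/1 over a single voter with budget $6$, whose $2\times 2$ payoff matrix has a matching-pennies-style best-response cycle (the paper uses costs $1,3,4,5$ where you use $3,4,4,2$, but the structure and verification are the same). Your explicit handling of \Phragmen{} via disjoint supporter groups of size $\cost(c)$ is precisely the device from \Cref{prop:NeNotExist}, and is in fact spelled out more carefully than in the paper's own appendix proof, which reuses the single-voter table without noting that \Phragmen{} would there process projects by increasing cost rather than by the tie-breaking order.
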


%
Nevertheless, in the multiwinner setting, restricting proposers' actions to submitting only a
single project allows for computing best responses in polynomial time,
by iterating over all possible strategies of a given player.

\begin{theorem}\label{thm:pb-best-response}
  For each rule among BasicAV, \Phragmen{}, and \Mes,
   \textsc{Single-Project Best
    Response} is solvable in polynomial time.
\end{theorem}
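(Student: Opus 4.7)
The plan is to exploit the fact that in a PSG/1 each proposer has only polynomially many legal strategies. Since $P_i$ must submit exactly one project, the strategy set of $P_i$ is just $\{\{c\} : c \in P_i\}$, which has size $|P_i| \leq m$. This already suggests a brute-force algorithm.

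The algorithm I would use is: for each $c \in P_i$, form the strategy $s_i = \{c\}$, construct the induced election $E(\stratprof_{-i}, s_i) = \bigl(\bigcup_{j\neq i} s_j \cup \{c\}, V, \budget\bigr)$, run the rule $f$ on it, and compute $u_i(\stratprof_{-i}, s_i) = \cost(f(E(\stratprof_{-i}, s_i)) \cap P_i)$. Finally, output \emph{yes} iff $\max_{c \in P_i} u_i(\stratprof_{-i}, \{c\}) \geq x$. Correctness is immediate because the algorithm exhaustively inspects every legal strategy of $P_i$.

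For the running time, the only nontrivial ingredient is computing $f$ on each induced election. For \basicAV{} this is clear: a single pass over the projects in decreasing approval order suffices. For \Phragmen{}, the iterative definition in \Cref{sec:preli} terminates after at most $|C|$ iterations, and in each iteration the next project to be bought (and the time at which it becomes affordable) can be identified by straightforward arithmetic over the current account balances. For \Mes{}, the iterative definition similarly runs for at most $|C|$ rounds, and in each round one can find in polynomial time the project minimizing the affordability parameter $\rho$ (this is a standard observation used, e.g., by \citet{pet-pie-sko:c:pb-mes}). Thus each call to $f$ takes polynomial time, and the total running time is $O(m)$ such calls.

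There is no real obstacle here; the result is essentially immediate once one observes that single-project submission shrinks the strategy space of each proposer from exponential (as in the general PSG setting, which enabled the \np-hardness proofs for \textsc{Best Response} in \Cref{thm:best-response}) to linear. The contrast with the general setting is precisely what makes the PSG/1 model algorithmically attractive.
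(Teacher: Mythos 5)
Your proposal is correct and matches the paper's own argument exactly: the paper likewise observes that a PSG/1 player has only polynomially many strategies and that the best response can be found ``by iterating over all possible strategies of a given player,'' calling the resulting algorithm ``simply a form of brute-force search.'' Your additional remarks verifying that \basicAV, \Phragmen{}, and \Mes{} are each polynomial-time computable fill in the only routine detail the paper leaves implicit.
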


While the polynomial-time algorithm from the above theorem is simply a
form of brute-force
search, 
it is important to have it. In particular, it allows us to test if a
given strategy profile is an NE, and it may help in designing
heuristics or dynamics that try to find equilibria, when they exist.




\section{Empirical Analysis}\label{sec:experiments}

Our theoretical analysis shows that, in general, Nash equilibria are
not guaranteed to exist, neither for PSGs nor PSG/1s.  In this section
we explore how this situation looks in practice.

We considered 187 instances of real-life PB elections with up to 10
projects, taken from the Pabulib dataset
\citep{fal-fli-pet-pie-sko-sto-szu-tal:c:pabulib}. We have chosen this
limit on the number of projects due to the computational complexity of
dealing with our games. 
%
%
We assumed that there are $\ell\in\{2,3,4,5\}$ proposers, each of whom
has a roughly equal number of projects (that is, either
$\lfloor \nicefrac{m}{\ell}\rfloor$ or
$\lceil \nicefrac{m}{\ell}\rceil$ projects), assigned to them
uniformly at random. When the number of projects in an election is
smaller then $\ell+1$, then we omit such an instance.

The goal of our experiments is to check how typical it is for our
games to have equilibria and how difficult are they to compute. Towards
this end, for each of our PSGs (i.e., for each rule out of \basicAV, \Phragmen, and \Mes, for each of the 187 instances
from Pabulib and for each number $\ell$ of proposers) we executed the
following protocol (each step of this protocol corresponds to a more
involved way of finding an equilibrium):

\begin{enumerate}
\item We first check whether the strategy profile in which all the
  proposers submit all their projects is an NE (from now on, we refer
  to such profiles where everyone submits all the projects as
  \emph{full strategy profiles}).
\item If the full strategy profile is not an NE, then we apply the
  following best-response dynamics: Starting from the full strategy
  profile, all the proposers simultaneously and independently change
  their strategies to their best responses. We stop this process
  either when an NE is found, or after 10 iterations without reaching
  an NE (or if we detect a cycle).
\item If the above dynamics do not end in an NE, then we test if an
  equilibrium exists using brute-force search.
\end{enumerate} 
%
%
For PSG/1s, the experimental setup is analogous, except there is no
clear analog of the full strategy profile. Hence we omit the first
step and we use best-response dynamics that starts from a strategy
profile chosen uniformly at random. For \basicAV, we present our results in
\Cref{table:experiments}. The results for other rules are presented in \Cref{app:omitted-experiments}. Overall, they are similar to the ones for \basicAV.

Perhaps somewhat surprisingly,
it turns out that not only are Nash equilibria very common in our games, but also they can be computed in a fairly simple
way. Specifically, for a large majority of our PSGs already the full
strategy profile is an equilibrium, and for most other ones, as well as for large majority of PSG/1s, an
equilibrium can be found using our dynamics (there are only 5 cases
where an equilibrium exists but we had to resort to brute-force search
to find it---all of them appeared for \Phragmen-PSGs). Notably, we actually never had to exceed 4 iterations of the best response dynamics in order to find an equilibrium and in total (for all the rules) out of 2242 games where an NE was found by this algorithm, in 1952 games it happened after just one iteration (or at most one, for PSG/1s).

Consequently, our experiments suggest that simply
submitting all the projects is a good basic strategy, but occasionally
the proposers may benefit by considering more involved approaches, based on computing their best responses. As for
PSG/1s we can compute the best responses in polynomial time, this
means that---from practical point of view---finding equilibria for
these games is rather easy.



\begin{table}[t]
	\begin{center}

		\begin{tabular}{ c c c  c c}
			
                  \multicolumn{5}{c}{Project Submission Games (PSGs)}\\[1mm]
                  \toprule
		$\ell$	& {All} & {Full-NE} & {BR-NE} & {No NE} \\ \midrule
		
		2 & 187 & 133 & 48 & 6  \\ 
		3 & 179 & 146 & 26 & 7  \\
	  4 & 171 & 149 & 18 & 4 \\
		5 & 140 & 130 & 9 & 1 \\
                  \bottomrule
                  \\[2mm]
                  
                  \multicolumn{5}{c}{Single-Project Submission Games (PSG/1s)}\\[1mm]
                  \toprule
        $\ell$	& {All} & {} & {BR-NE} & {No NE} \\ \midrule
		
		2 & 187 & & 177 & 10  \\ 
		3 & 179 & & 170 & 9  \\
	  4 & 171 & & 164 & 7 \\
                  5 & 140 & & 136 & 4 \\
                  \bottomrule
		\end{tabular}
	\end{center}
	
	\caption{Results of the simulations for \basicAV-PSGs (top) and \basicAV-PSG/1s
          (bottom). In the column ``All'' we give the numbers of
          considered instances. In the column ``Full-NE'' we give the numbers of instances where the
          full strategy profile is an NE, and in ``BR-NE''---numbers of instances where the full strategy
          profile is not an NE, but NE can be found using our best
          response dynamics algorithm. We never had to use the brute-force search in case of \basicAV-PSGs or \basicAV-PSG/1s.}
	\label{table:experiments}
\end{table}

\section{Related Work}\label{sec:related}
We conclude our discussion by describing related work.  Our study is
closely related to the line of work on \emph{strategic candidacy},
initiated by \citet{dutta2001strategic} and continued, e.g.,\ by
\citet{eraslan2004strategic},
\citet{rodriguez2006candidate,rodriguez2006correspondence},
\citet{lang2013new}, \citet{polukarov2015convergence}, and
\citet{brill2015strategic}. In a candidacy game, the candidates have
preferences on election outcomes and want to obtain as favorable
result as possible by choosing to either run or withdraw from the
race. The main issues regard existence, nature, and complexity of
computing Nash equilibria in such games. While most of this literature
focuses on the single-winner case,
\citet{obr-pol-elk-grz:c:multiwinner-candidacy-games} considered the
multiwinner setting. The main difference between PSGs and candidacy
games is that in the former the proposers derive value only from their
projects, whereas in the latter the candidates care about the
performance of the other candidates.

The problem of nominating candidates in an election was also studied
by \cite{faliszewski2016hard}, who asked for the complexity of
deciding if a party can nominate an election winner in a nonstrategic
setting. 
\citet{faliszewski2016hard} focused on the simple Plurality rule and,
later, \citet{cechlarova2023hardness} followed-up with a study of more
involved rules. While the setting here is very related to ours, the
difference is that the authors focus on single-winner ordinal
elections, fix the submission limit, and ask for the possible
(or necessary) winners rather than for the existence of Nash
equilibria. Nonetheless, game-theoretic analysis of related settings
was covered, e.g.,\ by \citet{harrenstein2022computing}.  Furthermore,
the study of finding equilibria in nomination games has been explored
in a number of models related to voting. Examples include
the Hotelling-Downs
model~\citep{harrenstein2021hotelling,deligkas2022parameterized,maass2023hotelling}
and
tournaments~\citep{lisowski2022equilibrium,lisowski2022strategic}.

In the context of participatory budgeting, our work is related to that
of \citet{sreedurga2023indivisible} and
\citet{fal-jan-kac-lis-sko-szu:c:pb-cost-games}, both of which consider the setting
where each project has several different variants, each with a
different cost. Further, shortlisting of projects has also been studied by \citeauthor{Rey2020ShortlistingRA}~[\citeyear{Rey2020ShortlistingRA}]. \citet{sreedurga2023indivisible} studies computational
and normative properties of voting rules that take different project
variants as explicit part of the input, whereas
\citet{fal-jan-kac-lis-sko-szu:c:pb-cost-games} consider Nash equilibria in games
where project proposers choose the costs (and want to maximize the
amount of funds they can get). These games are similar to ours for the
case of single project submission games, except that \citet{fal-jan-kac-lis-sko-szu:c:pb-cost-games}
assume that the costs come from a continuous interval and do not
affect voters' approvals, whereas we only consider a discrete set of
project variants, each with a possibly different support.
\section{Conclusion}\label{sec:summary}
We initiated the study of strategic behavior of project proposers in
participatory budgeting and multiwinner voting. While we showed
intractability of checking the existence of an NE in many settings, we also
found some tractable cases, of which the most important is that of
multiwinner voting, where each proposer nominates a single
candidate. Further, our experiments show that in practice
equilibria are common and often have a simple structure.
In future work, it would be interesting to study games where
proposers have limited knowledge about the votes and/or competing
proposers.

\section*{Acknowledgements}
This project has received funding from the European Research Council
(ERC) under the European Union’s Horizon 2020 research and innovation
programme (grant agreement No 101002854). Additionally, Grzegorz Pierczy\'nski was supported by the European Union (ERC, PRO-DEMOCRATIC, 101076570).

\begin{center}
  \includegraphics[width=3cm]{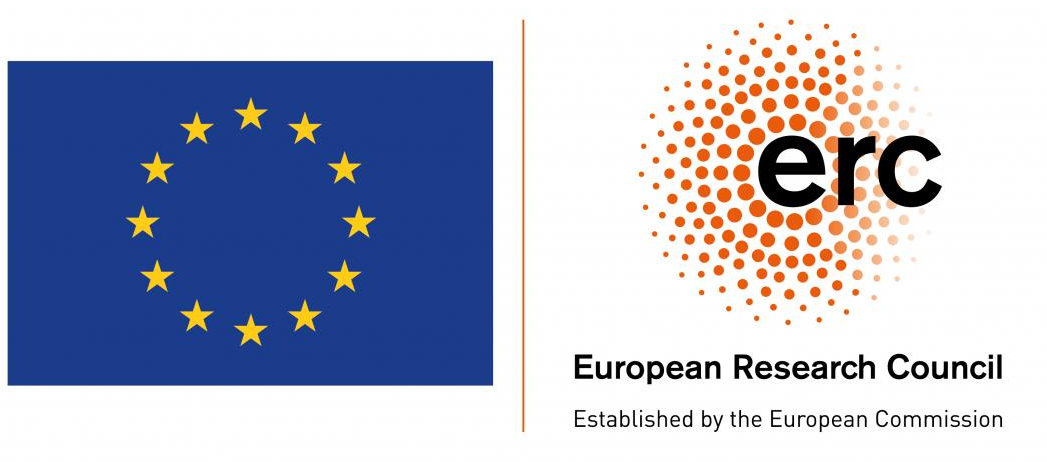}
\end{center}

\bibliographystyle{ACM-Reference-Format} 
\bibliography{biblio}





 \clearpage

\appendix

\section{Omitted Proofs}\label{app:proofs}



\thmnemainnegative*
\begin{proof}
    In the main text we have presented the reduction for $\np$-hardness for \basicAV\ and \Mes. Let us now show that the problem is also $\conp$-hard.

    Consider the same construction as for the $\np$-hardness case, with the following modifications: We increase the budget to $15\cdot T+14$ and add a new proposer $P_3$ with only one project costing $28$. This project is approved by everyone and in the tie-breaking is right after the projects of $P_0$ (and before the projects of $P_1, P_2$). Note that the only reasonable strategy (except for submitting an empty set) of $P_3$ is to submit their only project. Now if there exists a solution to the considered \textsc{Subset Sum} problem instance then the best strategy for $P_0$ is still to submit the projects corresponding to the choice of elements from $U$ summing up to $T$. After that, the project of $P_3$ is not funded, because it does not fit in the budget. Then there is no NE with respect to $P_1$ and $P_2$, as shown in the proof of \Cref{prop:NeNotExist}.
\end{proof}

\thmmultiwinnerhard*
\begin{proof}

 In the main text, we presented the NE existence proof for \basicAV{} and the reduction from 3-SAT to show that \Phragmen-\textsc{NE Existence} is $\conp$-hard. Here we will first describe the gadget used in the reduction in more detail, in particular, showing that there is indeed no NE in this game. Second, we will show how the presented 3-SAT reduction can be slightly modified to obtain constructions for \Mes\ and for $\np$-hardness.

 \textbf{Analysis of the gadget for \Phragmen.}\quad Let us show that the gadget PSG admits no NE. Suppose towards contradiction that such a profile $\stratprof$ exists. As noted in the main text, we can assume without loss of generality that projects $p_2, p_3, q_2, q_3, r_1, r_2$ are present in $\stratprof$, since their submission cannot lower the utility of $P_1, P_2, P_3$. 
 
 Let us consider the following, exhaustive cases.

\begin{enumerate}[label=(\arabic*)]
    \item Both $p_1, q_1$ are nominated. Then, observe that after purchasing $p_1$ and $q_1$, the voters supporting the remaining projects only have $\nicefrac{1}{2} - \nicefrac{1}{12} - \nicefrac{1}{10} = \nicefrac{19}{60} < \nicefrac{1}{3}$ budget left, and so no more projects are selected.
    \item Only one of the projects $p_1, q_1$ is nominated.  Then, we notice that after purchasing one of these projects, voters supporting $q_2$ and $q_3$ still have enough money to buy them. For voters supporting $p_1$ and $p_2$ this is not the case (note that these two projects can be only bought if their supporters do not spend their money on anything else).
    \item Neither of the projects $p_1, q_1$ is nominated. Then projects $q_2, q_3, p_1, p_2$ are bought. 
\end{enumerate}

The strategic game between $P_1$ and $P_2$ is summarized in \Cref{fig:mesnone}.
It is routine to check that there is no NE in this instance.

\begin{figure}[ht] 
		\begin{center}
\scalebox{1}{			\nfgame{$\emptyset$ $p_1$ $\emptyset$ $q_1$ $2$ $2$ $1$ $2$ $0$ $3$ $1$ $1$}}
		\end{center}
		\caption{Normal form representation, with rows representing $P_1$'s choices and column $P_2$'s. The values correspond to the number of members that proposers' have secured in the committee under each potential NE profile.}\label{fig:mesnone}
  \end{figure}

  On the other hand, if two voters approving $p_2$ and $p_3$
  respectively are removed (or their voting power is decreased so much
  that their presence does not matter, as it is the case in the
  presented reduction), then
  $(\{p_1, p_2, p_3\}, \{q_2, q_3\}, \{r_1, r_2\})$ is an NE
  profile. Indeed, then $p_1, q_2, q_3, r_1$ are purchased by
  \Phragmen. Player $P_1$ cannot improve their strategy, since it is
  not possible for them to have $p_2, p_3$ elected. Player $P_2$
  cannot improve their strategy, since if $q_1$ was submitted, it
  would be elected instead of $q_2, q_3$ (decreasing the utility of
  $P_2$).

 \textbf{$\boldsymbol \np$-hardness, \Phragmen.}\quad To obtain the reduction for $\np$-hardness for \Phragmen, it is enough to take the same construction as for $\conp$-hardness with the following modifications: (1) remove $2$ voters approving $c^*$, (2) add two voters approving all the clause projects, (3) change the preferences of the two voters approving gadget projects and the clause projects, so that they approve the same gadget projects as before and $c^*$ instead. 

 Now by the very same reasoning as presented in the main text, we obtain that the gadget is autonomous (hence, there is no NE) if and only if $\varphi$ is not satisfiable.

\textbf{\Mes.}\quad The reductions for \Mes{} is analogous to the ones for \Phragmen, with the following modifications: (1) in the gadget, we have to change the budget from $4$ to $8$ (so that each group of $6$ voters corresponds to $\nicefrac{n}{\budget}$ voters), (2) in the whole construction, we assume that $w$ is divisible by $6$ (if not, we add up to $5$ new variables that do not appear in any clause) and set $\budget = 7/6w + 9$, (3) we remove the artificial projects and proposer $R$. Note that then in the whole construction we have that $\nicefrac{n}{\budget} = 6$. Then in the construction the behavior of \Mes{} is identical as \Phragmen{}, up to the fact that instead of electing some $x$ artificial projects, we elect a non-exhaustive outcome of $\budget-x$ projects. Hence, we can directly repeat the reasoning for the proofs for \Phragmen{} to prove both $\np$-hardness and $\conp$-hardness for \Mes{}.
\end{proof}

\thmbestresponse*
\begin{proof}

Points 2 and 4 follow directly from our NE constructions in \Cref{thm:multiwinner-hard} and \Cref{thm:ne-phragmen-mes-party-laminar} (in which players have dominant strategies). Point 1 follows from an analogous construction as in \Cref{thm:ne-main-negative} where we ask for a best response for player $P_0$. Let us now focus on the third point.

    \textbf{\Mes.}\quad We reduce from the NP-complete \xthreec{} problem, where given a universe set
	$\mathcal{U}=\{u_1, \dots, u_{3t} \}$ and a family~$\mathcal{S}=\{S_1, \dots,
	S_{3t} \}$ of three-element subsets of $\mathcal{U}$, we need to decide
	whether there are $t$~members of $\mathcal{S}$ that form a partition
	of~$\mathcal{U}$.

	\textit{Construction:} For an instance $I=(\mathcal{U},\mathcal{S})$ of \xthreec{}, we construct the
	corresponding game, which we call the \emph{encoding} of $I$. First, we let all of the projects in
	the cost 1 and $\budget = t$. Second, we build a set~$C= \{c(S)
	\mid S \in \mathcal{S}\}$ of projects. Then, we construct the
	set $V= \{v(u) \mid u \in \mathcal{U}\}$ of voters. We assume that for each $u\in \mathcal{U}, S\in \mathcal{S}$, voter $v(u)$ approves a candidate $c(S)$ if and only if $u\in S$. Finally, there is only one proposer, owning all the projects. Since there is only one proposer, the considered problem boils down to the question whether there exists a strategy giving $P_1$ utility $x$.
 
    \textit{Correctness:} Let $x=t$. We show that a strategy giving $P_1$ utility $t$ exists if and
	only if~$I$ is a positive instance of X3C.

	Suppose that there exists a set~$s_1 \subseteq \mathcal{S}$ being a partition of $\mathcal{U}$ and $|X|=t$, thus certifying that~$I$ is a positive instance of X3C. Consider an election~$E$ induced by player $P_1$ playing $s_1$. By the fact that~$X$ is a partition
	of~$\mathcal{U}$, every project in~$E$ is supported by its own group of
	three voters (that is, all three-supporter groups are pairwise
	non-intersecting).  In the \Mes{} procedure, each voter gets initially $\nicefrac{t}{3t} = \nicefrac{1}{3}$, and, consequently, each group of three voters is able to buy their approved project. Hence, all members of $s_1$ are elected and player $P_1$ gets utility $t$.

  Suppose now that $I$ is a negative instance of {\sc X3C}. It means that, no matter which strategy $P_1$ chose, there would be a voter approving no submitted project (and thus, not paying for any elected project). Since the total amount of money owned by the voters is $t$, this means that less than $t$ project would be bought.

  \textbf{\Phragmen.}\quad The reduction from X3C is analogous as in case of \Mes, with the following modifications: we add $3$ additional "dummy" voters and an additional "dummy" candidate $c^*$ possessed by an additional proposer $P_2$. The project $c^*$ is the worst in the tie-breaking order. Here the only possible strategy of $P_2$ is to submit $c^*$, hence, all the considered problems are equivalent. Let us now show once again that proposer $P_1$ can gain utility $x=t$ if and only if $I$ is a positive instance of X3C.

  Suppose that there exists a set~$s_1 \subseteq \mathcal{S}$ being a partition of $\mathcal{U}$ and $|X|=t$, thus certifying that~$I$ is a positive instance of X3C. Consider an election~$E$ induced by player $P_1$ playing $s_1$ and player $P_2$ playing $c^*$. By the fact that~$X$ is a partition
	of~$\mathcal{U}$, every project in~$E$ is supported by its own group of
	three voters (that is, all three-supporter groups are pairwise
	non-intersecting). Note that under \Phragmen, each such group will buy their approved candidate at the moment $\nicefrac{1}{3}$ (when they collect $1$ dollar in total). At the same moment, dummy voters will be able to buy $c^*$ as well, yet $c^*$ will lose in the tie-breaking order with $t$ projects from $s_1$. Hence, player $P_1$ will get utility $t$.

  Suppose now that $I$ is a negative instance of {\sc X3C}. It means that, no matter which strategy $P_1$ chose, there would be a voter approving no submitted project. This means that at least one project of $P_1$ would be bought later than in the moment $\nicefrac{1}{3}$. But then project $c^*$ is bought by dummy voters at the moment $\nicefrac{1}{3}$ and $P_1$ gets utility only at most $t-1$.
\end{proof}

\propNeNotExistLimit*
\begin{proof}
    Take a PSG/1 with two proposers, $P_1 = \{a_1, a_2 \}$ and $P_2 = \{b_1, b_2 \}$. We assume that all of the projects are supported by all of the voters in the game. Let $\budget=6$. Also, let $ a_1 \succ b_1 \succ a_2 \succ b_2$. Finally, we let $a_1$ cost 1, $a_2$ cost 3, $b_1$ cost 4, and $b_2$ cost 5.

We observe that the utilities in the strategy profiles of this game are as summarized in the following table. Note that there is no NE in this game.

\begin{figure}[ht]\centering%
  \nfgame{$a_1$ $a_2$ $b_1$ $b_2$ $1$ $4$ $0$ $4$ $1$ $5$ $3$ $0$}
	\caption{Normal form representation, with rows representing the choices of $P_1$ and $P_2$.}\label{fig:MESParty-none-nfg}
\end{figure}
\end{proof}

\section{Thiele Rules}\label{app:thiele}

Besides the three rules studied in the main text, we also consider two types of Thiele
rules~\citep{thi:j:thiele,lac-sko:j:abc-rules}, which are frequently
studied in the context of multiwinner voting. Formally, for a multiwinner election $E = (C,V,\budget)$, they are
defined as follows:
\begin{description}
\item[Sequential Thiele Rules.] We define a sequential $w$-Thiele rule
  $f$ using a non-increasing function
  $w: \mathbb{N}_+ \rightarrow \mathbb{R}_+$, such that $w(1)=1$,
  which we refer to as $f$'s weight function. The score of an outcome
  $W$ is:
  \[
    \textstyle w\hbox{-}\textnormal{score}_E (W)= \sum_{v\in V}\left( \sum_{\ell=1}^{|A(v) \cap W|}w(\ell) \right).
  \]
  A sequential $w$-Thiele rule selects a winning outcome~$W$ by
  starting with $W=\emptyset$ and performing $\budget$ iterations: In
  each iteration it selects a project $c \in C \setminus W$ that
  maximizes $w\hbox{-}\textnormal{score}_E (W \cup \{c\})$ and, then,
  includes $c$ in~$W$ (using the tie-breaking order if needed).  We
  particularly focus on two sequential Thiele rules, namely \basicAV{}
  (already defined in the main text) and sequential Chamberlin-Courant
  (SeqCC). BasicAV uses constant function $w_{\mathrm{av}}$ where
  $w_{\mathrm{av}}(i) = 1$ for each $i$, and SeqCC uses function
  $w_{\mathrm{CC}}$ such that $w_{\mathrm{cc}}(1)=1$ and
  $w_{\mathrm{cc}}(i)=0$ for all $i > 1$.
\item[Global Thiele Rules.] Given a weight function $w$ (defined
  above), a global $w$-Thiele rule selects the set $W$ of $\budget$
  projects that maximizes
  $w\hbox{-}\textnormal{score}_E (W)$.\footnote{In case of a tie, we
    use our tie-breaking order, extended lexicographically to apply
    to sets of projects. Namely, for outcomes $W_1$ and $W_2$, we have
    that $W_1 \succ W_2$ if there is a project $c \in W_1 \setminus W_2$
    that is preferred (according to $\succ$) to all the members of
    $W_2 \setminus W_1$.} Note that for $w_{\mathrm{av}}(i) = 1$ we
  obtain \basicAV{}. Global Thiele rule based on function
  $w_{\mathrm{cc}}$ is known as the (approval-based)
  Chamberlin--Courant (CC)
  rule~\citep{cha-cou:j:cc,pro-ros-zoh:j:proportional-representation,bet-sli-uhl:j:mon-cc}.
\end{description}

While sequential Thiele rules are computable in polynomial time
(assuming the weight function is easily computable), for global Thiele
rules the problem of deciding if there is an outcome with a certain
score is
intractable~\citep{sko-fal-lan:j:collective,azi-gas-gud-mac-mat-wal:c:approval-multiwinner}.
In principle, one could consider Thiele rules for the full-fledged
participatory budgeting setting, but we find it unnatural;
\citet{pet-pie-sko:c:pb-mes} offer a formal argument supporting this
view.

As we mentioned in the main text, Thiele rules behave very similarily to \Mes\ and \Phragmen. Specifically, for PSGs we obtain the completely analogous results, presented below.

\begin{restatable}{theorem}{thmnethiele}\label{thm:ne-thiele}
  In the multiwinner setting, \textsc{NE Existence} for all global and sequential Thiele rules except for \basicAV\ is both $\np$-hard and $\conp$-hard.
\end{restatable}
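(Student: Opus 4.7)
The plan is to adapt the 3-SAT-based hardness construction from the proof of \Cref{thm:multiwinner-hard}, which already handled \Phragmen{} and \Mes{} in the multiwinner setting. That proof has two stages: (i) build a small ``gadget'' PSG that admits no NE for the target rule, structured as a matching-pennies-style cycle between two proposers around whether to submit a pair of ``big'' projects $p_1,q_1$; and (ii) embed that gadget into a 3-SAT reduction via literal projects, clause projects, an escape project $c^*$ owned by an auxiliary proposer $L$, and (for sequential rules) a bank of artificial projects owned by a proposer $R$, arranged so that the gadget becomes \emph{autonomous} (separated from the rest of the electorate, hence NE-free) iff $\varphi$ is satisfiable, which gives $\conp$-hardness; mirroring the voter multiplicities as in the appendix of \Cref{thm:multiwinner-hard} then yields the $\np$-hardness direction. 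I will reuse this scaffold verbatim and only redesign the gadget.

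For a \emph{sequential} $w$-Thiele rule with $w \neq w_{\mathrm{av}}$, fix the smallest $k$ with $w(k) < w(1)$; I would reproduce the topology of \Cref{fig:mes-laminar-none} with $P_1 = \{p_1,p_2,p_3\}$, $P_2 = \{q_1,q_2,q_3\}$, $P_3 = \{r_1,r_2\}$, and replace the voter counts in the figure by multiplicities scaled to the ratios $w(1)/w(2),\dots,w(k-1)/w(k)$. The calibration targets the same invariants as before: (a) $p_2,p_3,q_2,q_3,r_1,r_2$ are supported by dedicated disjoint blocks so their submission is always weakly dominant (the marginal score they bring is the full $w(1)$ times their block size), (b) once $p_1,q_1$ are both submitted, their overlapping approvers become over-represented, so the marginal $w$-score of $p_2$ or $q_2$ is deflated below that of the opponent's safe projects, and (c) submitting exactly one of $p_1,q_1$ is strictly better for the submitter than submitting neither or both. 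Conditions (a)--(c) recreate the matching-pennies cycle and a routine case analysis, analogous to \Cref{fig:mesnone}, certifies no NE.

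For a \emph{global} $w$-Thiele rule the same gadget works with a mild recalibration: because the global rule selects the score-maximizing committee rather than growing one greedily, I would insert enough ``anchor'' voters around $p_2,p_3,q_2,q_3,r_1,r_2$ so that any $\budget$-committee missing them incurs a prohibitive score loss, thereby pinning the committee to a form in which the only remaining degree of freedom is the $p_1$ vs.\ $q_1$ choice; the cycle argument then transfers verbatim. Embedding into 3-SAT reuses the same template: tie-breaking puts literals above clauses above $c^*$ above gadget above artificial projects, and the same counting argument (``each non-gadget project is supported by exactly one private voter block'') forces the committee structure. For global rules, $R$ and the artificial projects can be dropped since the rule does not ``spill over'' sequentially.

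The main obstacle is the uniformity of the construction in $w$: a single gadget must work for every non-constant non-increasing $w$, including weights like $w_{\mathrm{cc}}$ where $w(i)=0$ for $i>1$ and weights like PAV's $w(i)=1/i$ where the gap $w(i)-w(i+1)$ is small. I would handle this by parameterizing the voter multiplicities as integer ceilings of suitable multiples of $1/(w(1)-w(k))$, then verifying the three invariants (a)--(c) as a single parametric inequality system. The verification is tedious but mechanical; the creative content lies entirely in the gadget design, which I have sketched above by mirroring the one already used for \Phragmen.
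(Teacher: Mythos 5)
Your high-level architecture matches the paper's: reuse the 3-SAT embedding from \Cref{thm:multiwinner-hard} and swap in a Thiele-specific no-NE gadget. But the step you yourself flag as ``the main obstacle''---making a single gadget work uniformly for every non-constant weight vector $w$---is exactly where your construction breaks, and your proposed fix does not close the gap. Scaling voter multiplicities by the ratios $w(1)/w(2),\dots,w(k-1)/w(k)$ is undefined for Chamberlin--Courant-type weights where $w(2)=0$, and switching to multiples of $1/(w(1)-w(k))$ is only asserted to satisfy your invariants (a)--(c), never verified; for sequential CC the marginal score is a pure coverage count, which behaves qualitatively differently from \Phragmen's budget-sharing, so it is far from clear that the \Cref{fig:mes-laminar-none} topology with rescaled counts yields a best-response cycle at all.

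The paper's missing idea is a \emph{filler-project normalization}: writing the weight vector as $(1,\ldots,1,\alpha,\ldots)$ with $\alpha<1$ first appearing at position $x$, it introduces $x-2$ filler projects approved by \emph{all} voters (owned by a separate proposer), which are always submitted and elected first. After that, every voter's next approved project scores at position $x-1$ or $x$ of the vector, so the entire analysis collapses to the two values $1$ and $\alpha$, uniformly in $w$---no ratios, no division by zero, no small-gap issues for PAV-like weights. The gadget itself is also redesigned (four projects per main proposer, a different payoff matrix) rather than being a rescaled copy of the \Phragmen{} one, and the same filler trick handles global Thiele rules: all non-forced projects end up with equal marginal contribution $6\alpha$, so lexicographic tie-breaking pins down the committee, replacing your vaguer ``anchor voters.'' Without something like this normalization, your parametric inequality system would have to be checked separately for each family of weight vectors, and the CC case in particular fails as written.
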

\begin{proof}
We will present reductions from 3-SAT, analogous as the ones for \Phragmen. There are only two differences: (1) we need to use a different gadget to get a PSG without NE, (2) we need to add several \emph{filler} projects to the construction to ensure that after electing some projects from the main construction, the ones with intersecting sets of supporters will be elected after the artificial projects. Formally, let the Thiele vector be $(1, 1, \ldots, 1, \alpha, \ldots)$ for some $0 \leq \alpha < 1$. Since the considered Thiele rule is not \basicAV{}, such $x \geq 2$ needs to exist. Denote the position of $\alpha$ in the vector by $x$. The number of filler projects that we need to add will correspond to $x$.

 \textit{The Gadget:} Consider an election with three proposers, $F=\{f_1, \ldots, f_{x-2}\}$, $P_1 = \{p_1, p_2, p_3, p_4 \}$ and $P_2 = \{q_1, q_2, q_3, q_4 \}$. We let the approval sets of projects in this
game be as depicted in \Cref{fig:ccbasicav-none}. We also assume the
tie-breaking order is $f_1 \succ \ldots \succ f_{x-2} \succ p_1 \succ q_1 \succ q_4 \succ p_3 \succ q_2 \succ q_3 \succ
p_2 \succ p_4$.
Finally, let each project cost one and $\budget=x+2$.

\begin{figure}[ht]
\centering
  \scalebox{0.6}{\begin{tikzpicture}
      
    \draw [fill=red!40!white] (1,0) rectangle (5,1);
   
    

    \draw [fill=yellow!40!white] (0,-1) rectangle (6,0);
    \draw [fill=yellow!40!white] (0,-3) rectangle (6,-2);
    \node[anchor=center, fontscale=2] at (3,-0.5) {\huge $f_{x-2}$};
    \node[anchor=center, fontscale=2] at (3,-1.5) {\huge $\ldots$};
    \node[anchor=center, fontscale=2] at (3,-2.5) {\huge $f_1$};

 \draw [fill=blue!40!white] (3,1) rectangle (6,2);

\draw [fill=blue!40!white](4,2) rectangle (5,3);
\draw [fill=blue!40!white] (5,0) rectangle (6,1);
\draw [fill=blue!40!white] (3,2) rectangle (4,3);

\draw [fill=red!40!white] (0,0) rectangle (1,1);
\draw [fill=red!40!white]  (1,1) rectangle (2,2);
\draw [fill=red!40!white]  (2,1) rectangle (3,2);

 
  


    \node[anchor=center, fontscale=2] at (0.5,-3.5) {\huge $6$};
    \node[anchor=center, fontscale=2] at (1.5,-3.5) {\huge $6$};
    \node[anchor=center, fontscale=2] at (2.5,-3.5) {\huge $6$};
    \node[anchor=center, fontscale=2] at (3.5,-3.5) {\huge $6$};
    \node[anchor=center, fontscale=2] at (4.5,-3.5) {\huge $6$};
    \node[anchor=center, fontscale=2] at (5.5,-3.5) {\huge $6$};

    \node[anchor=center, fontscale=2] at (0.5, 0.5) {\huge $p_2$};
    \node[anchor=center, fontscale=2] at (1.5,1.5) {\huge $p_3$};
    \node[anchor=center, fontscale=2] at (2.5,1.5) {\huge $p_4$};

    \node[anchor=center, fontscale=2] at (3.5,2.5) {\huge $q_2$};
    \node[anchor=center, fontscale=2] at (4.5,2.5) {\huge $q_3$};
    \node[anchor=center, fontscale=2] at (5.5,0.5) {\huge $q_4$};

     \node[anchor=center, fontscale=2] at (3,0.5) {\huge $p_1$};
    \node[anchor=center, fontscale=2] at (4.5,1.5) {\huge $q_1$};
  \end{tikzpicture}}
\caption{An illustration of the gadget. Projects of different
  proposers are depicted as boxes of different colors. Sets of voters
  of given cardinalities approve projects that are placed above
  them.}\label{fig:ccbasicav-none}
\end{figure}
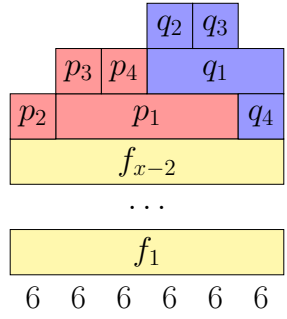

Let us demonstrate that this game does not admit an sequential nor global CC-NE. Suppose that this is not the case and take an equilibrium profile $\stratprof$. First, we notice that in an NE player $F$ always submits all their filler projects and gets them elected. Moreover, submitting a project other than $p_1$ and $q_1$ does not lower the utility of their proposer. Hence, without loss of generality, we assume that those projects are submitted in $\stratprof$. Let us then consider the following, exhaustive cases. 
(1) $p_1, q_1$ are submitted. Then, we observe that sequential CC selects $p_1, q_1, p_2$, and $q_4$.
(2) $p_1$ is submitted but $q_1$ is not. Then, $p_1, q_4, p_2$, and $p_3$ are selected.
(3) $q_1$ is submitted but $p_1$ is not. Then, we have $q_1, p_2, p_3$, and $p_4$ in the committee.
(4) Neither $p_1$ nor $q_1$ are submitted. Then, $q_4, p_3, q_2$, and $q_4$ are selected.

The strategic game capturing the four discussed cases is shown
in~\Cref{fig:ccbasicav-none-nfg}. It is routine to check that there is no NE in
this game.

\begin{figure}[ht]\centering%
  \nfgame{$\emptyset$ $p_1$ $\emptyset$ $q_1$ $1$ $3$ $3$ $1$ $3$ $1$ $2$ $2$}
	\caption{Normal form representation, with rows representing $P_1$'s choices
	and column $P_2$'s. The values correspond to the number of members that
 proposers' have secured in the committee under each potential NE
  profile.}\label{fig:ccbasicav-none-nfg}
\end{figure}

\textit{Construction:} The constructions are the same as for \Phragmen, with the following modifications: (1) instead of $48$ gadget voters we have $36$ ones, (2) the two gadget voters approving either all clause projects (for $\conp$-hardness) or $c^*$ (for $\np$-hardness)  are now the two voters approving $p_3$ and $p_4$, respectively, (3), the budget is set to $\budget=x+w+4$, (4) proposer $R$ is additionally endowed with $x-1$ \emph{filler} projects that are first in the tie-breaking. The first $x-2$ of them are supported by all the voters; the last one is supported by all non-gadget voters.

\textit{Correctness:} First, in an NE, all $x-1$ filler projects are submitted and elected. Further, if $p_1, p_2, q_1$, or $q_4$ are submitted, they would be elected. Finally, the marginal contribution of all the other projects is equal ($6\cdot \alpha$) so the tie-breaking decides which projects are elected. However, similarly as in the case of \Phragmen, we get that out of these projects, no two ones with intersecting sets of supporters can be elected (in one of them is included, the marginal contribution of the other one would be smaller than $6\alpha$, i.e., smaller than the marginal contribution of $\budget-x+1$ artificial projects). Note that the above statement is true both for sequential and global (with lexicographical tie-breaking) Thiele. Then, by an analogous reasoning as in the $\conp$-hardness proof for \Phragmen{}, we obtain that the gadget is autonomous (hence, there is no NE) if and only if $\varphi$ is satisfiable. Note that if $p_5$ is "hit" and impossible to be elected, then there is an NE with respect to $P_1$ and $P_2$, where $P_1, P_2$ submit all their projects (and $p_1, p_2, q_1, q_4$ are elected). 
\end{proof}

\begin{theorem}\label{thm:thiele-partylist}
  Every global and sequential Thiele
for the multiwinner setting with party-list preferences, admits a Nash equilibrium.
\end{theorem}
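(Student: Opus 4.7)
The plan is to mirror the proof of \Cref{thm:ne-phragmen-mes-party-laminar}, adapting the argument to Thiele rules by exploiting the well-known equivalence of sequential and global $w$-Thiele rules on party-list profiles. First I would recall the structural fact that on a party-list instance with parties $p_1, \ldots, p_t$ of supporter counts $n_1, \ldots, n_t$, a $w$-Thiele rule assigns some numbers $k_1, \ldots, k_t$ of seats to the parties, where $(k_1, \ldots, k_t)$ maximizes $\sum_{j} n_j \bigl( w(1) + \cdots + w(k_j) \bigr)$ subject to $\sum_{j} k_j = B$; because $w$ is non-increasing, the objective is separable and concave, so both the greedy sequential procedure and the global maximization pick the same allocation. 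The identities of the specific projects selected inside a party are then decided by tie-breaking, since any two unchosen projects of $p_j$ contribute the same marginal score $n_j \cdot w(k+1)$.

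Let $\stratprof$ be the profile in which every proposer submits every one of their projects. Suppose towards a contradiction that some proposer $P_i$ has a strictly improving deviation $s_i \subsetneq P_i$. I would decompose the deviation into a sequence of single-project withdrawals and analyze one such step in which $P_i$ withdraws a project $c$ belonging to party $p_j$. If $c$ was not selected, the rule's outcome is unchanged, hence so is $u_i$. If $c$ was selected, exactly one of two things happens: (i) $p_j$ still holds $k_j$ seats and the rule selects another submitted project $c'$ of $p_j$ in place of $c$; or (ii) $p_j$ now holds $k_j - 1$ seats and one extra seat is granted to some other party $p_{j'}$, filled by some project $c'$ of $p_{j'}$.

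In case (i), the replacement $c'$ is either a project of $P_i$ (leaving $u_i$ unchanged) or a project of another proposer (strictly decreasing $u_i$ by $1$). In case (ii), $P_i$ loses the seat at $p_j$ previously held by $c$ and either gains the new seat at $p_{j'}$ (when $c'$ is owned by $P_i$, so the net change in $u_i$ is $0$) or does not (net change $-1$). Hence every single-project withdrawal weakly decreases $u_i$, so the composition satisfies $u_i(\stratprof_{-i}, s_i) \leq u_i(\stratprof)$, contradicting the assumption. This establishes that $\stratprof$ is an NE.

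The main obstacle is the case analysis justifying that a single withdrawal triggers only one of the simple outcomes (i) or (ii). I would handle this by invoking the separable concave structure noted above: both sequential and global $w$-Thiele amount to an allocation of the $B$ seats across parties that is individually optimal per party, so removing one project from the available pool of $p_j$ forces at most one modification---either an internal swap within $p_j$ if another project of $p_j$ is still available, or a transfer of a single seat to whichever party would be chosen next by the common greedy order on marginals.
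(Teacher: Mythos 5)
Your proposal is correct and takes essentially the same route as the paper: the paper likewise declares the full strategy profile an NE, invokes the sequential/global equivalence of Thiele rules on party-list profiles, and transfers the single-withdrawal case analysis from \Cref{thm:ne-phragmen-mes-party-laminar}. You merely spell out the seat-allocation argument that the paper leaves implicit (and omit only the degenerate case where fewer than $\budget$ projects remain submitted, in which the outcome simply shrinks --- still a weak loss for the deviator).
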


\begin{proof}
    It is known that for party-list elections sequential and global Thiele rules are equivalent \cite{brill2018multiwinner}. Then the reasoning used for \Phragmen\ in the proof of \Cref{thm:ne-phragmen-mes-party-laminar} directly transfers to all Thiele rules.
\end{proof}

\begin{restatable}{theorem}{thmbestresponsethiele}\label{thm:nest-response-thiele}
  In the multiwinner setting, \textsc{Best Response} for all global and sequential Thiele rules except for \basicAV\ is $\np$-hard.
\end{restatable}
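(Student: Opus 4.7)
The plan is to adapt the X3C reduction that was used for \Mes{} and \Phragmen{} in \Cref{thm:best-response}, with calibrations that force the ``diminishing-returns'' index of the Thiele weight function to come into play. Let $\tau \geq 2$ be the smallest index with $w(\tau) < 1$, which exists since the considered rule is not \basicAV. Given an X3C instance $(\mathcal{U},\mathcal{S})$ with $|\mathcal{U}|=3t$, I build a PSG with three proposers: $P_1$ owns the candidates $c(S)$, one per $S \in \mathcal{S}$; $P_2$ owns a single ``threat'' candidate $c^{*}$; and $P_3$ owns $\tau-2$ \emph{filler} projects $f_1,\ldots,f_{\tau-2}$ (omit $P_3$ if $\tau=2$). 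For each $u \in \mathcal{U}$ I place $k$ voters in a group $G_u$, each approving $c(S)$ whenever $u\in S$; in addition I place $N$ dummy voters approving only $c^{*}$, while every filler is approved by every voter. The budget is $\budget = t+\tau-2$. I choose integers $k,N$ with $k > 1/(1-w(\tau))$ and $k(2+w(\tau)) < N < 3k$; this is possible because $w(\tau)<1$. The best-response query fixes $P_2$ and $P_3$ to submit everything and asks whether $P_1$ can reach utility at least $t$.

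For a sequential $w$-Thiele rule, I argue that in the first $\tau-2$ rounds the fillers strictly dominate every other candidate (all weights up to $w(\tau-1)$ equal $1$, and fillers have the largest supporter set), so they are selected, after which every voter is at position $\tau-2$. From round $\tau-1$ on, a $P_1$-candidate $c(S)$ whose three element-groups are still \emph{fresh} has marginal gain exactly $3k$, the candidate $c^{*}$ has marginal gain $N$, and any $P_1$-candidate with at least one already-covered element has gain at most $2k+kw(\tau)$. The calibration gives $3k > N > 2k+kw(\tau)$, so the rule picks fresh $P_1$-candidates until no fresh one remains, then $c^{*}$, then fills the remaining slots with non-fresh $P_1$-candidates. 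Hence $P_1$ reaches utility $t$ iff they can submit $t$ pairwise-disjoint three-sets covering $\mathcal{U}$, i.e.\ iff the X3C instance has an exact cover; otherwise $c^{*}$ consumes one of the $t$ non-filler slots and the utility is at most $t-1$.

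For a global $w$-Thiele rule I compare two committee types: (A) the $\tau-2$ fillers together with $t$ candidates from $P_1$'s submission $Y$, and (B) the $\tau-2$ fillers together with $c^{*}$ and $t-1$ candidates from $Y$. A direct score calculation (using that every additional covering of a voter contributes $w(\tau),w(\tau+1),\ldots\leq w(\tau)$) shows that the best Type-A score equals $3tk(\tau-1)+N(\tau-2)$ when $Y$ contains an exact cover, and otherwise drops by at least $k(1-w(\tau))$; the best Type-B score is $3tk(\tau-1)+N(\tau-2)+N-3k$. Committees that drop any filler lose an entire $3tk+N$ block of approvals and hence cannot be optimal. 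The calibration $k(2+w(\tau)) < N < 3k$ is precisely what makes Type A win when $Y$ contains an exact cover and Type B win otherwise, again giving the equivalence with X3C.

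The main obstacle is handling an arbitrary Thiele rule uniformly: one must use the fillers to ``pad'' the first $\tau-2$ rounds so that the genuinely decreasing weight $w(\tau)$ is the one that separates fresh from non-fresh $P_1$-candidates, and simultaneously pick $k$ large enough that $3k > N > k(2+w(\tau))$ has an integer solution and the global-Thiele score gaps come out in the required direction. A minor additional point is ruling out exotic committees that drop a filler in exchange for extra $P_1$-candidates; this follows from the very large supporter set of each filler.
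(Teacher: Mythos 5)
Your proposal is correct and follows essentially the same route as the paper: both reduce from \xthreec{} via the \Phragmen{} best-response construction, add $\tau-2$ (the paper's $i-2$) universally-approved filler projects and enlarge the budget accordingly so that the first index with weight below $1$ is exactly what separates ``fresh'' cover candidates from overlapping ones and from the threat candidate $c^*$. The only difference is cosmetic: the paper keeps one voter per element and three dummies and resolves the resulting score ties via the tie-breaking order, whereas you introduce multiplicities $k$ and $N$ calibrated so that all the relevant comparisons are strict, and you spell out the global-Thiele score accounting that the paper leaves implicit.
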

\begin{proof}
Let the Thiele vector be $(1, 1, \ldots, 1, \alpha, \ldots)$ for some $0 \leq \alpha < 1$. Since the considered Thiele rule is not \basicAV{}, such $x \geq 2$ needs to exist. Denote the position of $\alpha$ in the vector by $i$. 
  
  The reduction is the same as for \Phragmen, with the following change: we add proposer $P_3$ having $i-2$ \emph{filler} projects, approved by all the voters. Additionally, we increase the budget by $i-2$. Naturally, all these projects are submitted and elected primarily by both sequential and global Thiele. After this change, the further reasoning is identical as for \Phragmen.
\end{proof}

One could wonder whether in \Cref{thm:thiele-partylist} the requirement of party-list preferences is as strict as for the rules studied in the main text---specifically, whether it will not hold for laminar preferences. We certainly know from \Cref{thm:multiwinner-hard} that it is not the case for \basicAV. Moreover, the gadget presented in the proof of \Cref{thm:ne-thiele} is not laminar. It turns out that the requirement for party-list preferences is strict for nearly all Thiele rules, with only three exceptions.

\begin{restatable}{proposition}{propnethiele}\label{prop:ne-thiele}
For multiwinner model, there exists a laminar PSG with no NE for Thiele rules except for BasicAV, sequential CC and global CC.
\end{restatable}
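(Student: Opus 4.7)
The plan is to construct, for every Thiele rule $f$ outside $\{\basicAV, \text{sequential CC}, \text{global CC}\}$, a laminar PSG admitting no Nash equilibrium, by adapting the non-laminar gadget from the proof of \Cref{thm:ne-thiele}. Let $x \geq 2$ be the smallest index with $w(x) < 1$, which exists because $f$ is not \basicAV, and set $\alpha = w(x)$. Since $f$ is not CC, the monotone vector $w$ has $w(y) > 0$ for some $y \geq 2$; I will describe the typical case $\alpha > 0$, the marginal case $\alpha = 0$ with $y < x$ being handled by a symmetric variant in which the role of position $x$ is taken by $y$.

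The gadget has three layers. First, an auxiliary proposer $F$ owns $x-2$ filler projects approved by every voter; these are picked first by $f$ and shift every voter to position $x-1$, so that the next approved project contributes weight $1$ per supporter and the one after contributes $\alpha$. Second, the main proposers $P_1$ (with projects $p_1,p_2,p_3$) and $P_2$ (with $q_1,q_2,q_3$) are arranged so that $S(q_1) \subsetneq S(p_1)$---this is the laminar repair of the partial overlap in the original gadget---and $p_2,p_3,q_2,q_3$ are supported by voter groups properly nested inside $S(p_1)$ or $S(q_1)$. Third, an auxiliary proposer $P_3$ owns $r_1,r_2$ with supporter groups disjoint from $S(p_1)$ and from one another; these fill the remaining committee slots once the marginal contribution of $q_1$ has been exhausted. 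The critical sizing is $|S(q_1)| = \Theta(1/\alpha)$, chosen so that after $p_1$ is elected the marginal $|S(q_1)|\cdot\alpha$ still strictly exceeds $|S(r_i)|$, forcing $q_1$ to be selected next whenever it is submitted. A direct case analysis over the four strategy profiles determined by $P_1$'s choice of submitting $p_1$ and $P_2$'s choice of submitting $q_1$, with all other projects always submitted (they are weakly dominant by nesting), then exhibits a matching-pennies-style best-response cycle. Laminarity is immediate from the nesting, and for global $w$-Thiele I would argue that with the same sizes the greedy sequential outcome coincides with the global optimum up to lexicographic tie-breaking, so the same cycle applies.

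The step I expect to be hardest is the parametric sizing of the voter groups: the same counts must simultaneously control the greedy marginals at several successive iterations of the sequential rule in each of the four profiles, uniformly in $\alpha$, and the borderline ties must be resolved by a fixed tie-breaking order in a way that is consistent with the required cycle. Finally, the three exceptions are explained structurally. \basicAV{} admits a dominant-strategy equilibrium by \Cref{thm:multiwinner-hard}(1). For both CC rules, once $p_1$ has been elected every voter in $S(p_1)$ has already saturated its CC-contribution of $1$, so $q_1$ (whose support lies inside $S(p_1)$) contributes zero marginal thereafter, the $P_1$--$P_2$ dilemma collapses, and the profile in which every proposer submits all of their projects is already a Nash equilibrium. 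Hence the construction genuinely cannot rule out NE for these three rules, confirming they are bona fide exceptions.
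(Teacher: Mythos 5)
Your high-level strategy---prepend universally-approved filler projects to shift every voter to a chosen position of the weight vector, then run a matching-pennies gadget between two proposers---is the same as the paper's. But the proposal has a genuine gap at its core: the laminar gadget itself is never constructed, and the single parameter you extract from $w$ is very likely insufficient to make it work. You take $\alpha=w(x)$ at the first position where the weight drops below $1$ and try to size $|S(q_1)|=\Theta(1/\alpha)$; however, in a \emph{laminar} gadget the nesting $S(q_2)\subseteq S(q_1)\subsetneq S(p_1)$ forces some voters to approve three gadget projects, so the case analysis over the four profiles unavoidably involves $w(x+1)$ as well, which you never control. (If, say, $w(x)=w(x+1)=\cdots$, the ``depletion'' effect that should break the tie between $q_1$ and the $r_i$'s at depth three need not occur.) This is exactly why the paper first proves \Cref{lem:tripleValues}---that every non-\basicAV, non-CC weight vector contains three \emph{consecutive} values $\alpha\geq\beta>\gamma$---and then reuses, verbatim and with no parametric sizing, the already-laminar \Phragmen{} gadget from \Cref{thm:multiwinner-hard}, placing the $s-1$ fillers so that the gadget projects land on those three positions. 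Your deferred ``hardest step'' is therefore not merely tedious: without identifying the strict decrease one position deeper, the construction can fail for legitimate rules.

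Two smaller issues. First, you assert that for global $w$-Thiele the greedy outcome coincides with the optimum on your instance; the paper instead proves the general statement that sequential and global $w$-Thiele coincide on all laminar elections (\Cref{prop:thiele-laminar}) via an exchange argument, which is what lets it reduce to the sequential case once and for all---your version would still need a proof. Second, the $\alpha=0$ case (vectors of the form $(1,\ldots,1,0,\ldots)$, which are neither \basicAV{} nor CC) is dispatched with an unspecified ``symmetric variant''; your $\Theta(1/\alpha)$ sizing is undefined there, whereas the triple-values formulation handles it uniformly. The paragraph on why the three excepted rules are genuine exceptions is not needed for the statement and itself contains an unproved claim (that the full profile is an NE for CC on such instances).
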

\begin{proof}
Let us first prove two helpful lemmas. First of all, it turns out that the fact that sequential and global Thiele rules are equivalent for party-list preferences transfers also to laminar preferences.

\begin{restatable}{lemma}{propthielelaminar} \label{prop:thiele-laminar}
  For each weight function $w$, sequential and global $w$-Thiele rules
  are equal for laminar elections.
\end{restatable}
\begin{proof}
    Take any laminar election $E$, weight vector $w$, and a committee $W$ elected by Sequential $w$-Thiele. Let $W'$ be the committee maximizing $w$-score. Since $W$ and $W'$ have equal size and $W\neq W'$, there exists $c\in W\setminus W'$ and $c'\in W'\setminus W$. Denote by $V_c$, $V_{c'}$ the sets of supporters of these projects. Let us choose $c\in W\setminus W'$ so that 
    $c$ is the first project from $W\setminus W'$ selected by sequential $w$-Thiele.

    First, consider the $x^{\textit{th}}$ round in which sequential $w$-Thiele chose $c$. Denote by $W_x$ the committee elected by sequential $w$-Thiele before that round. It holds that:
    \begin{equation}
        \sum_{v\in V_c} w_{|A_v\cap W_x|+1} \geq  \sum_{v\in V_{c'}} w_{|A_v\cap W_x|+1}.
    \end{equation}
    Since the preferences are laminar and we assumed that $c$ was the first project from $W\setminus W'$ selected by sequential $w$-Thiele satisfying voters from $V_c$ (hence, $W_x\subseteq W\cap W'$), we know that in $W'$ there are no other projects approved by voters from $V_c$ than the ones from $W_x$. Indeed, if there was such a project $a$, then the sets of supporters of $a$ is either a strict subset of $V_c$, or it is equal to $V_c$ and $a$ is worse than $c$ in the tie-breaking order. In both cases, it would be better to swap $a$ with $c$ in $W'$. 
    
    Hence, the inequality is equivalent to: 
    \begin{equation}
        \sum_{v\in V_c} w_{|A_v\cap W'|+1} \geq  \sum_{v\in V_{c'}} w_{|A_v\cap W_x|+1}.
    \end{equation}
    On the other hand, we know that vector $w$ is non-increasing, hence for each voter $v\in V_{c'}$,  $w_{|A_v\cap W_x|+1} \geq w_{|A_v\cap W\cap W'|+1} \geq w_{|A_v\cap W'|}$. From that we obtain the following inequality:
    \begin{equation}
        \sum_{v\in V_c} w_{|A_v\cap W'|+1} \geq  \sum_{v\in V_{c'}} w_{|A_v\cap W'|}.
    \end{equation}

    However, this means that swapping $c'$ and $c$ in $W'$ would increase the $w$-score. Since $W'$ is the committee maximizing $w$-score, we obtain a contradiction.
 end{comment}
\end{proof}

From now on, we can then just focus on sequential Thiele rules. The second lemma describes the crucial property differentiating \basicAV\ and sequential CC from other sequential Thiele rules.

\begin{lemma}\label{lem:tripleValues}
For every sequential Thiele rule $f$ other than \basicAV{} and CC there are three subsequent values $\alpha, \beta, \gamma $ in the vector $w$ corresponding to $f$, such that $\alpha \geq \beta > \gamma$.
\end{lemma}

\begin{proof}
Take such a rule $f$ and its corresponding vector $w$. Also, let $d_w$ denote a sequence $(z_1, z_2, \dots )$, where $z_i$ is the set of all elements of $w$ of the same value. We assume that the values corresponding to elements of $z_w$ cover cover all values in $w$ and that $z_w$ is non-increasing.  Observe that since $f$ is not \basicAV{} or sequential CC, we either have that (1) $|z_1| >1$ and $|z_2|>0$, or (2) $w$ has at least three values.
If (1) is the case, then the claim holds immediately. If (2) is the case, then either some $|z_1|>1$ or for each $i$, $|z_i|=1$. In both cases the claim immediately holds.
\end{proof}

Now, take a sequential Thiele rule $f$ that is not \basicAV{} or CC, as
well as its corresponding vector $w$. Also, take values $\alpha, \beta, \gamma$
in $w$, as specified in \Cref{lem:tripleValues}. Finally, let $s$ be the position of $\alpha$ in $w$. 

Let us construct a game without an NE. First, take the same laminar example as for \Phragmen\ (gadget from \Cref{thm:multiwinner-hard}). Moreover, we take $s-1$ \emph{filler} proposers having one project each. Such projects are supported by all of the
voters. Also, let $\budget=s+3$.
Now, we notice that the actions of all proposers other than $P_1, P_2$ are
constant in all strategy profiles. Also, all projects of filler proposers are
always selected. Furthermore, as each voter receives less utility according to
$w$ for the third project in $P_1 \cup P_2$ than for the previous one, we
receive that the utilities of $P_1, P_2$ in the current game are the same as
in the proof of \Cref{thm:multiwinner-hard}. Hence, there is no $f$-NE in this PSG. 
\end{proof}

Regarding PSG/1s, \Cref{thm:pb-best-response} clearly transfers also to global and sequential Thiele rules (since still the number of strategies of each player is polynomial). What is less obvious, the same holds for \Cref{thm:ne-exists-for-limit-1}---the only difference is that for global Thiele rules, while an NE profile always exists, it is not polynomial-time computable. 

\begin{theorem}
    For all global and sequential Thiele rules, 
  every PSG/1 for the multiwinner setting has an
  NE profile. For sequential Thiele rules, it is polynomial-time computable.
\end{theorem}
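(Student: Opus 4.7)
The plan is to handle sequential and global Thiele rules separately, extending the proof strategy of \Cref{thm:ne-exists-for-limit-1}.

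For sequential Thiele rules, I would reuse the inductive ``round-by-round dominant strategy'' argument from the proof of \Cref{thm:ne-exists-for-limit-1}. The key property exploited there---that removing a non-selected project does not change which projects are picked in earlier rounds---holds for every sequential $w$-Thiele rule, since each round selects a project maximizing marginal $w$-score and the marginal score of any project is unaffected by the absence of another project not yet in $W$. Starting from the full-submission profile, let $c_1$ be the round-one pick; it belongs to some proposer $P_1$, and by the property above $P_1$ can commit to submitting only $c_1$ without losing it. Since no proposer can earn more than $1$ in a multiwinner PSG/1, this is a dominant move. Fix $\{c_1\}$ as $P_1$'s strategy, remove $P_1$'s remaining projects, and iterate: in each further round identify the proposer whose project is next selected and fix them to submit exactly that project; finally, assign arbitrary legal strategies to the proposers with no selected project. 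The procedure terminates in at most $\budget$ sequential Thiele runs, and is thus polynomial whenever $w$ is efficiently computable.

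For global Thiele rules, I would prove existence by a maximum-score argument. Among all strategy profiles of the game, let $\stratprof^*$ be one whose induced winning committee $W^*$ has the largest $w$-score, with ties broken by the lexicographic extension of the tie-breaking order. Suppose, toward a contradiction, that $\stratprof^*$ is not an NE: then some proposer $P_i$ whose submitted project $s_i^*$ is not in $W^*$ has a profitable deviation $c_i'$, and $c_i'$ enters the new winning committee $W'$. Since $s_i^* \notin W^*$, every project of $W^*$ is submitted by some proposer other than $P_i$, so $W^*$ remains a feasible committee at the new profile $(\stratprof^*_{-i}, c_i')$. The global Thiele rule picks $W'$ over $W^*$ at the new profile, so $W'$ either has strictly higher $w$-score than $W^*$ (contradicting maximality of $W^*$) or has equal $w$-score but beats $W^*$ in tie-breaking (contradicting the lexicographic choice of $\stratprof^*$). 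Note that $W' \neq W^*$ since $c_i' \in W' \setminus W^*$.

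The main obstacle, and the reason the polynomial-time claim is restricted to the sequential case, is that for global Thiele rules other than \basicAV{} (e.g., CC) even computing the outcome on a fixed input is already $\np$-hard, let alone searching over all strategy profiles for the lex-maximal one. Consequently, I would state the global-Thiele half as a pure existence result, relying on the counting argument above, and state the polynomial-time claim only for the sequential family, where the constructive algorithm above provides an explicit procedure.
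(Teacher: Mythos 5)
Your proposal is correct and follows essentially the same route as the paper: the sequential case reuses the round-by-round dominant-strategy argument of \Cref{thm:ne-exists-for-limit-1}, and the global case picks a strategy profile inducing a committee of maximal $w$-score (ties broken lexicographically) and derives a contradiction from any profitable single-proposer deviation. Your added observation that the deviating proposer's original project must be unselected (so the old winning committee remains submitted) is exactly the step the paper relies on, and your remark on why polynomial-time computability fails for global Thiele rules matches the paper's discussion.
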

\begin{proof}
For sequential Thiele rules we can use the same proof as for \Cref{thm:ne-exists-for-limit-1}. However, this reasoning works only for sequential rules, which means that it does not extend at all to global Thiele rules. Therefore, we present for them an alternative proof.

Consider a global Thiele rule $f$ for some PSG/1. Let us show
  that there exists an NE profile $\stratprof^*$ in this game. Let us
  first observe that each strategy profile $\stratprof$ corresponds to
  a set of projects selected by $f$. So, we can choose profile
  $\stratprof$ inducing an outcome maximizing the score under $f$,
  preferred among such outcomes in the tie-breaking order.
    
  We show that $\stratprof$ is an NE. Assume towards a contradiction
  that it is not. Then, take a proposer $P_i$ whose submitted project
  is not selected under $\stratprof$ and their representative $c_i'$
  who is selected under $(\stratprof_{-i}, c_i')$. Then the set of
  projects selected by $f$ under $\stratprof$ remains submitted in
  $(\stratprof_{-i}, c_i')$, but one of them is not selected in the
  changed profile.  This entails that the outcome of $f$ for
  $(\stratprof_{-i}, c_i')$ provides a better outcome than under
  $\stratprof$, which contradicts how~$\stratprof$ was formed.
    
\end{proof}

\section{Alternative Definitions of \Mes}\label{app:mes}

We are aware of the fact that the general definition of \Mes{} proposed by \citet{pet-pie-sko:c:pb-mes} might at first glance seem to be a bit different from the definition we considered in the main text of the paper. Specifically, for general cardinal utilities (the setting where each voter $v$ has the utility function $u_v\colon C \to \naturals$ instead of the approval set $A_v$), the definition is as follows:

\begin{description}
    \item[\MesLong\ (\Mes).]%
  Initially, each voter receives $\nicefrac{\budget}{|V|}$ virtual
  currency. Then, each iteration proceeds as follows (let $b(v)$ be
  the money held by voter $v$ at the beginning of the iteration): For
  a project $c$ and a number $\rho$, we say that $c$ is
  $\rho$-affordable if it holds that:
  \begin{equation*}
    \textstyle
    \sum_{v\in N} \min(b(v), \rho \cdot u_v(c)) = \cost(c),
  \end{equation*}
  i.e.,\ if the voters approving $c$ can afford it, provided that each $v$
  of them contributes $\rho\cdot u_v(c)$ dollars (or all the money
  they have left if they have too little money).  {\Mes}
  includes in $W$ the project that is affordable for the lowest value
  of $\rho$ (and takes
  $\min(b(v), \rho\cdot u_v(c))$ currency from each $v \in S$; these voters \emph{buy} $c$).  {\Mes}
  terminates if no project is affordable for any $\rho$.
\end{description} 

Intuitively, value $\rho$ can be described as the \emph{maximal payment per point of utility}; in principle it means that voters should pay for projects proportionally to their utilities from them (e.g., a voter obtaining utility $10$ from a project $c$ should pay $5$ times more than a voter obtaining utility $2$ from $c$). The cost-utility variant of \Mes{} considered by us in the main text presumes that each voter $v$ has the following utility function: $\cost(c)$ for each $c\in A_v$ and $0$ for each $c\notin A_v$. This assumption is fairly standard, both in the literature \cite{pet-pie-sko:c:pb-mes,fal-fli-pet-pie-sko-sto-szu-tal:c:pabulib} and in practice (e.g., when \Mes{} was applied in Wieliczka \cite{boe:evaluation}). However, one could wonder, what would happen if we considered \Mes{} with a different approval-to-utilities mapping? 

Among such variants, the second most popular one in the literature (besides the cost-utility one) is the \emph{binary utilities} variant \cite{pet-pie-sko:c:pb-mes,fal-fli-pet-pie-sko-sto-szu-tal:c:pabulib}. Here it is assumed that each voter gains utility $1$ from each approved elected project and utility $0$ from each non-approved or unelected one. Formally, the main equation in the definition of \Mes{} looks now as follows:

\begin{equation*}
    \textstyle
    \sum_{v\in S(c)} \min(b(v), \rho) = \cost(c),
  \end{equation*}


Obviously, in the multiwinner model, the utilities of all the voters from all the approved projects are the same. This means that all the results for \Mes{} for the multiwinner setting, presented in the main text, transfer directly to any other generalization of \Mes{} to PB. 

From the results for the general PB setting, only \Cref{thm:pb-best-response} transfers directly to the binary-utilities \Mes{} (with the same proof as presented in the main text). On the other hand, the constructions from \Cref{prop:NeNotExist} and \Cref{thm:ne-main-negative} do not transfer, since they both require the assumption that under equal support, all the projects are tied irrespectively of their cost (and more expensive ones are better in tie-breaking), while binary-utilities \Mes{} (similarly as \Phragmen{}) would prefer cheaper projects to more expensive ones in such a situation. However, the trick that worked for \Phragmen{} in the proof of \Cref{prop:NeNotExist} (constructing a plurality election in which each project has support equal to its cost) does not work for \Mes{}. Indeed, for each plurality \Mes{}--PSG there clearly exist an NE (each project can be then considered in isolation whether it is affordable or not). We leave the question whether there always exist an NE for party-list \Mes{}--PSGs (and \Mes{}--PSG/1s) in the general PB setting open for future research.

\section{Omitted Empirical Results}\label{app:omitted-experiments}

In this section, we present the results for \Phragmen- and \Mes-PSGs and PSG/1s. Overall they are similar to those for \basicAV\ and the differences are rather subtle: for \Phragmen, we observe less frequently (yet still in the majority of cases) that full strategy profiles are in NE. Here we also observe the only 5 games in which we could not find NE profiles using out best response dynamics. On the other hand, for \Mes\ we observe that the equilibria are even more frequent and more straightforward than in case of \basicAV.

\begin{table}[t]
	\begin{center}

		\begin{tabular}{ c c c  c c c}
			
                  \multicolumn{6}{c}{\Phragmen-PSGs}\\[1mm]
                  \toprule
		$\ell$	& {All} & {Full-NE} &{BR-NE} & {BF-NE} & {No NE} \\ \midrule
		
		2 & 187 & 107 & 65 & 1 & 14  \\ 
		3 & 179 & 132 & 35 & 3 & 9  \\
	  4 & 171 & 141 & 28 & 0 & 2 \\
		5 & 140 & 127 & 11 & 1 & 1 \\
                  \bottomrule
                  \\[2mm]
                  
                  \multicolumn{6}{c}{\Phragmen-PSG/1s}\\[1mm]
                  \toprule
        $\ell$	& {All} & {} & {BR-NE} & {BF-NE}  & {No NE} \\ \midrule
		
		2 & 187 & & 176 & 0 & 11  \\ 
		3 & 179 & & 170  & 0 & 9  \\
	  4 & 171 & & 166 & 0 & 5 \\
                  5 & 140 & & 131 &  0& 9 \\
                  \bottomrule
		\end{tabular}
	\end{center}
	
	\caption{Results of the simulations for \Phragmen-PSGs (top) and \Phragmen-PSG/1s
          (bottom). In the column ``All'' we give the numbers of
          considered instances. In the columns ``Full-NE'', ``BR-NE'', ``BF-NE'' we give the numbers of instances where, respectively, (1) the
          full strategy profile is an NE, (2) the full strategy
          profile is not an NE, but NE can be found using our best
          response dynamics algorithm and (3) NE exists but we had to use the brute-force search to find it.}
	\label{table:experiments}
\end{table}

\begin{table}[t!]
	\begin{center}

		\begin{tabular}{ c c c  c c}
			
                  \multicolumn{5}{c}{\Mes-PSGs}\\[1mm]
                  \toprule
		$\ell$	& {All} & {Full-NE} & {BR-NE} & {No NE} \\ \midrule
		
		2 & 187 & 167 & 18 & 2  \\ 
		3 & 179 & 170 & 8 & 1  \\
	  4 & 171 & 163 & 8 & 0 \\
		5 & 140 & 136 & 4 & 0 \\
                  \bottomrule
                  \\[2mm]
                  
                  \multicolumn{5}{c}{\Mes-PSG/1s}\\[1mm]
                  \toprule
        $\ell$	& {All} & {} & {BR-NE} & {No NE} \\ \midrule
		
		2 & 187 & & 186 & 1  \\ 
		3 & 179 & & 178 & 1  \\
	  4 & 171 & & 170 & 1 \\
                  5 & 140 & & 140 & 0 \\
                  \bottomrule
		\end{tabular}
	\end{center}
	
	\caption{Results of the simulations for \Mes-PSGs (top) and \Mes-PSG/1s
          (bottom). In the column ``All'' we give the numbers of
          considered instances. In the column ``Full-NE'' we give the numbers of instances where the
          full strategy profile is an NE, and in ``BR-NE''---numbers of instances where the full strategy
          profile is not an NE, but NE can be found using our best
          response dynamics algorithm. We never had to use the brute-force search in case of \Mes-PSGs or \Mes-PSG/1s.}
	\label{table:experiments}
\end{table}

\end{document}